\theoremstyle{thmstyleone}%
\newtheorem{theorem}{Theorem}%  meant for continuous numbers
\newtheorem{lemma}[theorem]{Lemma}%
\theoremstyle{thmstyletwo}%
\theoremstyle{thmstylethree}%
\begin{document}

\title[Sequential Experimental Designs for Kriging Model]{Sequential Experimental Designs for Kriging Model}

%%=============================================================%%
%% GivenName	-> \fnm{Joergen W.}
%% Particle	-> \spfx{van der} -> surname prefix
%% FamilyName	-> \sur{Ploeg}
%% Suffix	-> \sfx{IV}
%% \author*[1,2]{\fnm{Joergen W.} \spfx{van der} \sur{Ploeg} 
%%  \sfx{IV}}\email{iauthor@gmail.com}
%%=============================================================%%

\author[1]{\fnm{Ruonan} \sur{Zheng}}\email{rnanzheng@163.com}

\author[1]{\fnm{Min-Qian} \sur{Liu}}\email{mqliu@nankai.edu.cn}
%\equalcont{These authors contributed equally to this work.}

\author*[1]{\fnm{Yongdao} \sur{Zhou}}\email{ydzhou@nankai.edu.cn}
%\equalcont{These authors contributed equally to this work.}

\author*[2]{\fnm{Xuan} \sur{Chen}}\email{chenxuan@nudt.edu.cn}

\affil[1]{\orgdiv{NITFID, LPMC$\&$KLMDASR, School of Statistics and Data Science}, \orgname{Nankai University}, \orgaddress{\city{Tianjin}, \postcode{300071}, \country{China}}}

\affil[2]{\orgdiv{College of Sciences}, \orgname{National University of Defense Technology}, \orgaddress{\city{Changsha}, \postcode{410073}, \country{China}}}

%%==================================%%
%% Sample for unstructured abstract %%
%%==================================%%

\abstract{Computer experiments have become an indispensable alternative to complex physical and engineering experiments. The Kriging model is the most widely used surrogate model, with the core goal of minimizing the discrepancy between the surrogate and true models across the entire experimental domain. However, existing sequential design methods have critical limitations: observation-based batch sequential designs are rarely studied, while one-point sequential designs have insufficient information utilization and suffer from inefficient resource utilization---they require numerous repeated observation rounds to accumulate sufficient points, leading to prolonged experimental cycles. To address these gaps, this paper proposes two novel one-point sequential design criteria and a general batch sequential design framework. Moreover, the batch sequential design framework solves the inherent point clustering problem in naive batch selection, enabling efficient extension of any sequential criterion to batch scenarios. Simulations on some test functions demonstrate that the proposed methods outperform existing approaches in terms of fitting accuracy in most cases.}

\keywords{batch design, Computer experiments, Global fitting, Surrogate model.}

%%\pacs[JEL Classification]{D8, H51}

%%\pacs[MSC Classification]{35A01, 65L10, 65L12, 65L20, 65L70}

\maketitle

\section{Introduction}
\label{sec:1}
Computer experiments find extensive applications in the real world, especially as alternatives to certain complex physical and engineering experiments \citep{Santner:2003}. The Gaussian process model \citep{Gramacy:2020}, also known as the Kriging model, is the most prevalently used in computer experiments. When constructing a Kriging model, the selection of experimental points is crucial. We term all experimental points as a design. Therefore, fitting models with different designs will yield models of varying performances. In general, obtaining observations from the complex model in the real world is extremely resource-consuming. Therefore, the objective is to leverage existing information as much as possible to select fewer observation points, ensuring that the model fitted based on the observed data exhibits optimal performance. When selecting an experimental design, a common practice is to determine all experimental points at once and observe them. In this case, we typically choose a space-filling design which can uniformly explore the design space for observation \citep{Fang:2006}. There are numerous criteria for measuring the space-filling properties of a design, and we can choose to implement a design that performs well under one of these criteria. For example, discrepancies \citep{Fang:2018} characterize the space-filling properties by calculating the difference between the empirical distribution of a design and the uniform distribution. 

In contrast to single-stage experiments where all experimental points are pre-determined in a single step, relying entirely on a priori assumptions about the experimental domain and underlying true model---sequential designs offer a dynamic and adaptive alternative by enabling the iterative addition of new design points to an existing set based on accumulated observational information, addressing the critical limitations of single-stage designs. If pre-selected points in single-stage designs fail to cover key regions, the resulting model will suffer from poor fitting performance, which leads to wasted resources and suboptimal outcomes, whereas sequential designs leverage data-driven feedback to refine the design set iteratively---each new round of observations reveals the model's fitting gaps, guiding subsequent points to target under-explored or high-uncertainty regions, thus not only enhancing global fitting accuracy but also improving resource efficiency by avoiding redundant observations in low-value regions and concentrating efforts on high-impact areas. The construction of sequential designs is tailored to specific research goals. For instance, \cite{Schonlau:1998} proposed the expected improvement (EI) algorithm for the sequential search for the global minimum. \cite{Lam:2008} developed a modified EI criterion for global fit. \cite{Ranjan:2008} put forward a sequential methodology for estimating a contour from a complex computer code. 

Sequential experiments can be broadly classified into two distinct types. The first category is observation-free sequential design, in which new design points added at each stage are pre-determined without incorporating data from prior observations. For example, \cite{Kim:2017} introduced a sequential design based on the minimum energy criterion. \cite{Zhang:2024} constructed sequential designs with good space-filling properties through the good lattice point method. The second category is observation-based sequential design, a data-driven approach that necessitates observing the response of each design point and utilizing the acquired information to guide the selection of subsequent experimental points. For example, \cite{Mu:2016} refined the criterion proposed by \cite{Lam:2008} to develop sequential designs.

However, while existing research has yielded several sequential design criteria, most are tailored to one-point sequential designs. Such one-point criteria tend to incur excessive iterative cycles, leading to inefficient resource utilization. In many practical scenarios, batch sequential designs are far more necessary. For instance, in high-fidelity aerospace component simulations, each round of observation requires substantial fixed costs including equipment calibration, simulation environment setup, model initialization, and parameter tuning, whereas the marginal cost of increasing the number of observation points within the same round is negligible. In such cases, batch sequential designs offer distinct advantages: they significantly reduce the number of experimental rounds, avoiding redundant fixed cost consumption, and enhance the efficiency of model fitting by supplementing multiple key regions simultaneously. Despite this, research on observation-based batch sequential designs remains scarce. Aiming at the goal of global fitting, this paper first puts forward two novel criteria for selecting sequential design points: a gradient-based criterion and a variance-based criterion. Additionally, by combining observation-based sequential design criteria with batch sequential designs, we introduce a framework of batch sequential design that can be applied to any sequential design criterion. Finally, through simulations, we demonstrate that the sequential designs derived from the proposed criteria and framework generally outperform.

The structure of this paper is organized as follows. Section \ref{sec:2} outlines the notation and background related to Kriging models and sequential designs. Section \ref{sec:3} presents two novel sequential design criteria. Section \ref{sec:4} introduces a framework for generating batch sequential designs. Section \ref{sec:5} displays the simulation results of comparisons between the proposed methods and existing approaches. Section \ref{sec:conc} concludes this paper. 

\section{Notation and background}
\label{sec:2}
In this section, we first introduce the notations relevant to the true model and review the corresponding modeling approaches. Furthermore, we outline the framework for sequential design and summarize existing sequential design criteria.

\subsection{Kriging model}
Let $Z(\boldsymbol{x})$ be a Gaussian process on $R^m$. Assume its covariance structure is given by
$$Cov(Z(\boldsymbol{x}),Z(\boldsymbol{y}))=\tau^2k(\boldsymbol{x},\boldsymbol{y})$$ for any $\boldsymbol{x},\boldsymbol{y}\in R^m$, where $\tau^2$ presents the variance and $k(\boldsymbol{x},\boldsymbol{y})$ is the correlation function. Several special correlation functions are frequently employed. For example, the separable Gaussian correlation function is defined as
$$k(\boldsymbol{x},\boldsymbol{y};\theta)=\exp\left\{-\sum_{k=1}^{m}\theta_k|x_{k}-y_{k}|^2\right\}$$ and the Mat\'{e}rn correlation function is 
$$k(\boldsymbol{x},\boldsymbol{y};\nu,\phi)=\frac{1}{\Gamma(\nu) 2^{\nu-1}} (2\sqrt{\nu} \phi \|\boldsymbol{x}-\boldsymbol{y}\|)^\nu K_\nu (2\sqrt{\nu} \phi \|\boldsymbol{x}-\boldsymbol{y}\|),$$
where $\boldsymbol{x}=(x_1,\ldots,x_m)$, $\boldsymbol{y}=(y_1,\ldots,y_m)$, $\|\boldsymbol{x}\|$ denotes the $l_2$-norm of vector $\boldsymbol{x}$, and $K_\nu$ is the modified Bessel function of the second kind.

When modeling the relationship between input variables and responses, Gaussian process modeling is a prevalent approach, and the resulting model is also known as the Kriging model. Let $\boldsymbol{X}=(\boldsymbol{x}_1,\ldots,\boldsymbol{x}_n)^T$ and $\boldsymbol{Y}_{\boldsymbol{X}}=(Z(\boldsymbol{x}_1),\ldots,Z(\boldsymbol{x}_n))$ be the observations at $\boldsymbol{X}$. Based on $\boldsymbol{X}$ and $\boldsymbol{Y}_{\boldsymbol{X}}$, we can compute the prediction and its variance at an unobserved point $\boldsymbol{x}$. The specific expressions are
\begin{align*}
	\hat{y}_{\boldsymbol{X}}(\boldsymbol{x})&=\boldsymbol{r}_{\boldsymbol{X}}^T(\boldsymbol{x})\boldsymbol{K}^{-1}(\boldsymbol{X})\boldsymbol{Y}_{\boldsymbol{X}},\\
	\widehat{s^2_{\boldsymbol{X}}}(\boldsymbol{x})&=\tau^2\left(k(\boldsymbol{x},\boldsymbol{x})-\boldsymbol{r}_{\boldsymbol{X}}^T(\boldsymbol{x})\boldsymbol{K}^{-1}(\boldsymbol{X})\boldsymbol{r}_{\boldsymbol{X}}(\boldsymbol{x})\right),
\end{align*}
where $\boldsymbol{r}_{\boldsymbol{X}}(\boldsymbol{x})=(k(\boldsymbol{x},\boldsymbol{x}_1),\ldots,k(\boldsymbol{x},\boldsymbol{x}_n))^T$ and $\boldsymbol{K}(\boldsymbol{X})=(k(\boldsymbol{x}_i,\boldsymbol{x}_j))_{n\times n}$. The subscript \( \boldsymbol{X} \) signifies that the value is derived or observed based on \( \boldsymbol{X} \). It can be proven that $\hat{y}_{\boldsymbol{X}}(\boldsymbol{x})$ is the best linear predictor based on $\boldsymbol{Y}_{\boldsymbol{X}}$, and both $\hat{y}_{\boldsymbol{X}}(\boldsymbol{x})$ and $\widehat{s_{\boldsymbol{X}}^2}(\boldsymbol{x})$ have explicit expressions. These aspects constitute important reasons why the Kriging model ranks among the most frequently used models.

Our work in this paper focuses on global fitting of the Kriging model. However, the model's performance is highly dependent on the selection of observation points $\boldsymbol{x}_1,\ldots,\boldsymbol{x}_n$. Consequently, the strategic selection of these points proves to be a critical determinant of fitting accuracy. In the subsequent subsection, we will outline a framework for systematically selecting design points to address this key requirement. 

\subsection{Sequential design framework}
Space-filling designs are widely adopted for fitting computer experiments, where each design point corresponds to an observation point in the experimental domain. In general, a single-stage space-filling design is employed to construct the Kriging model, meaning that all design points to be observed are pre-determined prior to the execution of any experimental observations. A variety of criteria exist for evaluating the space-filling performance of a design, such as discrepancy measures and the maximin distance criterion. For practical implementation, we can directly select a design that exhibits superior performance under a specific space-filling criterion as the single-stage design for Kriging model fitting.

However, in certain practical scenarios, such as when computational resources are constrained, it is necessary to select design points for observation in a sequential manner, which is termed a sequential design. Let \(\boldsymbol{X}_0\) be the initial design, \(\boldsymbol{B}_i=(\boldsymbol{x}_{i1},\ldots,\boldsymbol{x}_{ib})^T\) be the design added at the \(i\)-th iteration where $b$ is the batch size, and \(\boldsymbol{X}_i\) be the design composed of all points after the \(i\)-th iteration, i.e., $\boldsymbol{X}_i = (\boldsymbol{X}^T_0, \boldsymbol{B}^T_1, \ldots, \boldsymbol{B}^T_i)^T$. Let \(\boldsymbol{Y}_i\) be the observed values of the points in \(\boldsymbol{X}_i\). In practical applications, only the points in $\boldsymbol{B}_i$ need to be observed at each iteration, rather than re-observing the entire set of points in $\boldsymbol{X}_i$. The pseudocode for the framework of constructing Kriging model-based sequential designs is presented in Algorithm \ref{alg:1}.

\begin{algorithm}[htbp]
	\caption{Sequential design for Kriging model}
	\label{alg:1}
	\begin{algorithmic}[1]
		\State Select a space-filling design as the initial design \(\boldsymbol{X}_0\).
		\State \(i \leftarrow 0\).
		\While{termination condition is not satisfied}
		\State \(\boldsymbol{Y}_{\boldsymbol{X}_i}\) $\leftarrow$ the observed values of \(\boldsymbol{X}_i\).
		\State Select \(\boldsymbol{B}_{i+1}\) according to the criterion \(\phi\).
		\State \(\boldsymbol{X}_{i+1} \leftarrow (\boldsymbol{X}_i, \boldsymbol{B}_{i+1})\).
		\State \(i \leftarrow i + 1\).
		\EndWhile
		\State Fit the Kriging model using the final \(\boldsymbol{X}_i\) and \(\boldsymbol{Y}_{\boldsymbol{X}_i}\).
	\end{algorithmic}
\end{algorithm}

In Algorithm 1, the number of rows in the initial design, the number of rows in $\boldsymbol{B}_i$, the criterion $\phi$, and the termination condition must all be determined according to practical scenarios. Specifically, $\boldsymbol{B}_i$ may consist of a single design point or multiple design points. If the criterion $\phi$ is independent of $\boldsymbol{Y}_{\boldsymbol{X}_i}$, it is unnecessary to conduct observations for $\boldsymbol{B}_i$ in each iteration; otherwise, it is essential to refit the Kriging model using $\boldsymbol{X}_i$ and $\boldsymbol{Y}_{\boldsymbol{X}_i}$, as the model parameters will change correspondingly. In addition, the termination condition can be defined in various ways, such as whether the fitting accuracy of the Kriging model has reached a preset threshold, or whether the total number of observation points has hit a predefined upper limit.

\subsection{Review of sequential design criteria}
In this part, we review some criteria for sequential designs. Generally speaking, for a given criterion $\phi$, the next point $\widetilde{\boldsymbol{x}}$ to be added is the point that maximizes $\phi(\boldsymbol{x})$ in the experimental region $\mathcal{X}$, which is formally expressed as
$$\widetilde{\boldsymbol{x}}=\arg\max_{\boldsymbol{x}\in\mathcal{X}}\phi(\boldsymbol{x}).$$
We classify these criteria into two distinct categories: the first category includes criteria where all sequential design points are pre-determined before observing the response values of any experimental points, which we term observation-free criteria; the second category encompasses criteria that rely on existing design points and their corresponding response values for sequential point selection, which we refer to as observation-based criteria.

For convenience, we denote the existing design as $\boldsymbol{X}$. For the observation-free criteria, some commonly used space-filling criteria, such as the maximin distance criterion and discrepancies, can be adopted. The core objective is to ensure that the augmented design comprised of the newly selected design point and the existing design points exhibits favorable space-filling properties within the experimental domain. In this paper, we employ the mixture discrepancy (MD) as the space-filling criterion. Since the performance of other space-filling criteria is similar to that of MD, they will not be considered herein. The MD of a design $\boldsymbol{D}=(x_{ij})_{n\times m}$ is
\begin{align*}
	MD^2(\boldsymbol{D})=&\left( \frac{19}{12} \right)^m - \frac{2}{n} \sum_{i=1}^n \prod_{j=1}^m \left( \frac{5}{3} - \frac{1}{4} \left| \tilde{x}_{ij} \right| - \frac{1}{4} \left| \tilde{x}_{ij} \right|^2 \right)\\ &+ \frac{1}{n^2} \sum_{i,k=1}^n\prod_{j=1}^m \left( \frac{15}{8} - \frac{1}{4} \left| \tilde{x}_{ij} \right| - \frac{1}{4} \left| \tilde{x}_{kj} \right| - \frac{3}{4} \left| \tilde{x}_{ij} - \tilde{x}_{kj} \right| + \frac{1}{2} \left| \tilde{x}_{ij} - \tilde{x}_{kj} \right|^2 \right),
\end{align*}
where $\tilde{x}_{ij}=x_{ij}-0.5$ for $i=1,\ldots,n$ and $j=1,\ldots,m$. For more details about MD, one can refer to \cite{Zhou:2013}. Under the MD criterion, the next design point $\tilde{x}$ to be added is selected by solving 
$$\widetilde{\boldsymbol{x}}=\arg\max_{\boldsymbol{x}\in\mathcal{X}}-MD^2(\boldsymbol{X}\cup\boldsymbol{x}).$$

For the observation-based criteria, \cite{Sacks:1988} proposed a strategy where one selects the point that minimizes the maximum $\widehat{s_{\boldsymbol{X}}^2}(\boldsymbol{x})$ for all $\boldsymbol{x}\in\mathcal{X}$. However, this strategy may consume much computational time. Therefore, we simplify it to finding the point with the maximum variance based on the existing design, that is, 
$$\widetilde{\boldsymbol{x}}=\arg\max_{\boldsymbol{x}\in\mathcal{X}}\phi_s(\boldsymbol{x}),$$
where $\phi_s(\boldsymbol{x})=\widehat{s_{\boldsymbol{X}}^2}(\boldsymbol{x})$. Assume that the true model between the input and the response is \( f(\boldsymbol{x}) \) and it is smooth. In practice, the specific form of $f(\boldsymbol{x})$ is typically unknown, and we can only obtain the true response values at a finite set of design points. Let $\boldsymbol{x}^*=\arg\min_{\boldsymbol{y}\in\boldsymbol{X}}d(\boldsymbol{x},\boldsymbol{y})$ denotes the point in the existing design $\boldsymbol{X}$ that is closest to $\boldsymbol{x}$ and $d(\boldsymbol{x},\boldsymbol{y})$ represents the Euclidean distance between $\boldsymbol{x}$ and $\boldsymbol{y}$. Building on this, \cite{Lam:2008} defined the improvement function as $I_0(\boldsymbol{x})=\left(f(\boldsymbol{x})-f(\boldsymbol{x}^*)\right)^2$. Let 
$$\phi_{EI_0}(\boldsymbol{x})=EI_0(\boldsymbol{x})=\left(\hat{y}_{\boldsymbol{X}}(\boldsymbol{x})-f(\boldsymbol{x}^*)\right)^2+\widehat{s_{\boldsymbol{X}}^2}(\boldsymbol{x}).$$
They proposed the expected improvement criterion for global fitting to choose the next point by maximizing $\phi_{EI_0}$. Specifically, the first term in $\phi_{EI_0}(\boldsymbol{x})$ quantifies the bias of the Kriging model prediction, while the second term represents the variance, which reflects the systematic uncertainty associated with the current observational information. \cite{Mu:2016} noted that $\phi_{EI_0}(\boldsymbol{x})$ fails to incorporate local directional information of the true function. To address this limitation, they revised the improvement function to $I_1(\boldsymbol{x})=\left(f(\boldsymbol{x})-f(\boldsymbol{x}^*)-\nabla^T f(\boldsymbol{x}^*)(\boldsymbol{x}-\boldsymbol{x}^*)\right)^2$ and proposed the new criterion
$$\phi_{EI_1}(\boldsymbol{x})=EI_1(\boldsymbol{x})=\left(\hat{y}_{\boldsymbol{X}}(\boldsymbol{x})-f(\boldsymbol{x}^*)-\nabla^T f(\boldsymbol{x}^*)(\boldsymbol{x}-\boldsymbol{x}^*)\right)^2+\widehat{s_{\boldsymbol{X}}^2}(\boldsymbol{x}).$$
Furthermore, to avoid selecting new points that are excessively close to existing design points, they introduced a distance penalty term and derived a variant of $\phi_{EI_1}(\boldsymbol{x})$ as
$$\phi_{EI_2}(\boldsymbol{x})=\left(\hat{y}_{\boldsymbol{X}}(\boldsymbol{x})-f(\boldsymbol{x}^*)-\nabla^T f(\boldsymbol{x}^*)(\boldsymbol{x}-\boldsymbol{x}^*)\right)^2\times d(\boldsymbol{x},\boldsymbol{x}^*)+\widehat{s_{\boldsymbol{X}}^2}(\boldsymbol{x}).$$
In subsequent sections of this paper, we will compare the method proposed in this study with the aforementioned criteria, elaborating on their respective advantages and disadvantages, applicable scenarios, and some relevant theoretical findings.

\section{Two novel criteria for sequential design}
\label{sec:3}
In this section, we propose two novel criteria for sequential design: the gradient-based criterion and the variance-based criterion. These two criteria attain the objective of global fitting from distinct perspectives, each addressing this core objective through a unique methodological approach.
\subsection{Gradient-based criterion}
We first elaborate on the motivation behind the development of the gradient-based criterion, followed by a detailed derivation of its mathematical formulation.
\subsubsection{Motivation}
We assume the true response model is
\begin{equation}
	\label{eq:2}
\begin{aligned}
	f(x_1,x_2)=&\left(15x_2 - \frac{5.1}{4\pi^2}(15x_1-5)^2 + \frac{5}{\pi}(15x_1-5) - 6\right)^2 \\&+ 10\left(1 - \frac{1}{8\pi}\right)\cos(15x_1-5) + 10,
\end{aligned}
\end{equation}
where $(x_1,x_2)\in\left[0,1\right]^2$. The function $f$ in Equation \eqref{eq:2} is the well-known Branin function \citep{Jamil:2013}, which is widely used as a test function in the field of surrogate modeling and experimental design.

To fit this true response model $f$, we adopt the Kriging model and select a separable Gaussian correlation function with a minor modification to avoid the singularity of the correlation matrix $\boldsymbol{K}$ \citep{Gramacy:2020}, that is
\begin{equation}
	\label{eq:6}
	k(\boldsymbol{x},\boldsymbol{y};\theta)=\exp\left\{-\sum_{k=1}^{2}\theta_k|x_{k}-y_{k}|^2\right\}+g\delta(\boldsymbol{x},\boldsymbol{y}),
\end{equation}
where $\theta_1,\theta_2,g$ are parameters and the indicator function $\delta(\boldsymbol{x},\boldsymbol{y})$ is defined as  $\delta(\boldsymbol{x},\boldsymbol{y})=1$ if $\boldsymbol{x}=\boldsymbol{y}$, and $\delta(\boldsymbol{x},\boldsymbol{y})=0$ otherwise.

Two distinct experimental designs are employed in this study. The first design, $\boldsymbol{D}_1$, is a uniform design optimized under the MD criterion. A uniform design with the specified number of rows and columns can be generated by the ``GenUD'' function from the ``UniDOE'' package \citep{Zhang:2018} in R, where we choose the number of levels to be the same as the number of rows. The second design $\boldsymbol{D}_2$ is the one-point sequential design with $10$ initial design points based on the gradient-based criterion introduced in the next subsection, which adds more points in regions with large gradient magnitudes while preserving favorable space-filling properties of the design.

\begin{figure}[htbp]
	\centering  % ??figure???????????
	% ????????????0.48??????????
	\begin{subfigure}[b]{0.48\textwidth} 
		\centering  % ?????????????????subfigure????
		\includegraphics[width=\textwidth]{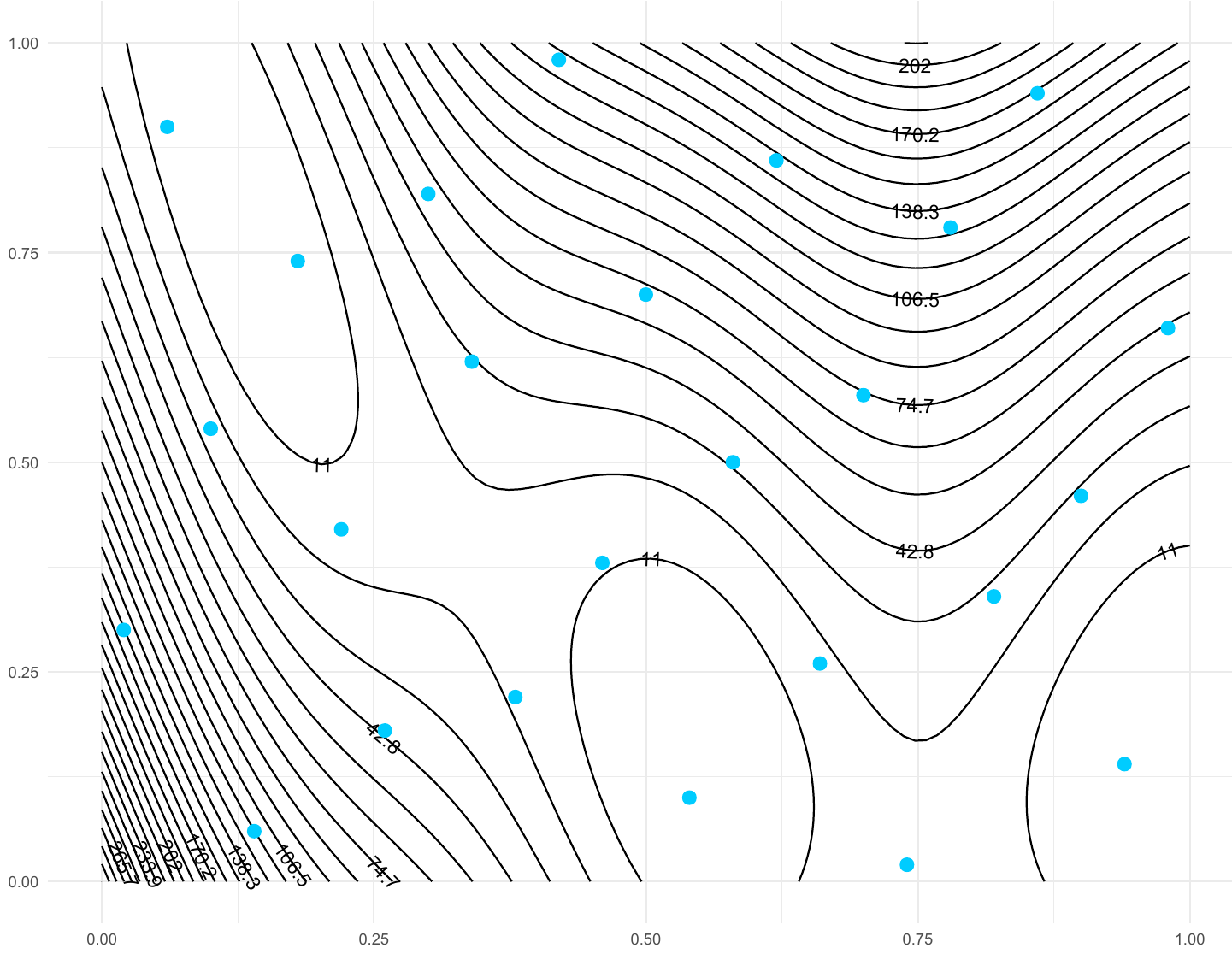}
		\caption{Design points of $\boldsymbol{D}_1$}
		\label{fig:sub1}
	\end{subfigure}
	\hfill  % ????????????????????
	% ???????????\centering?????
	\begin{subfigure}[b]{0.48\textwidth} 
		\centering
		\includegraphics[width=\textwidth]{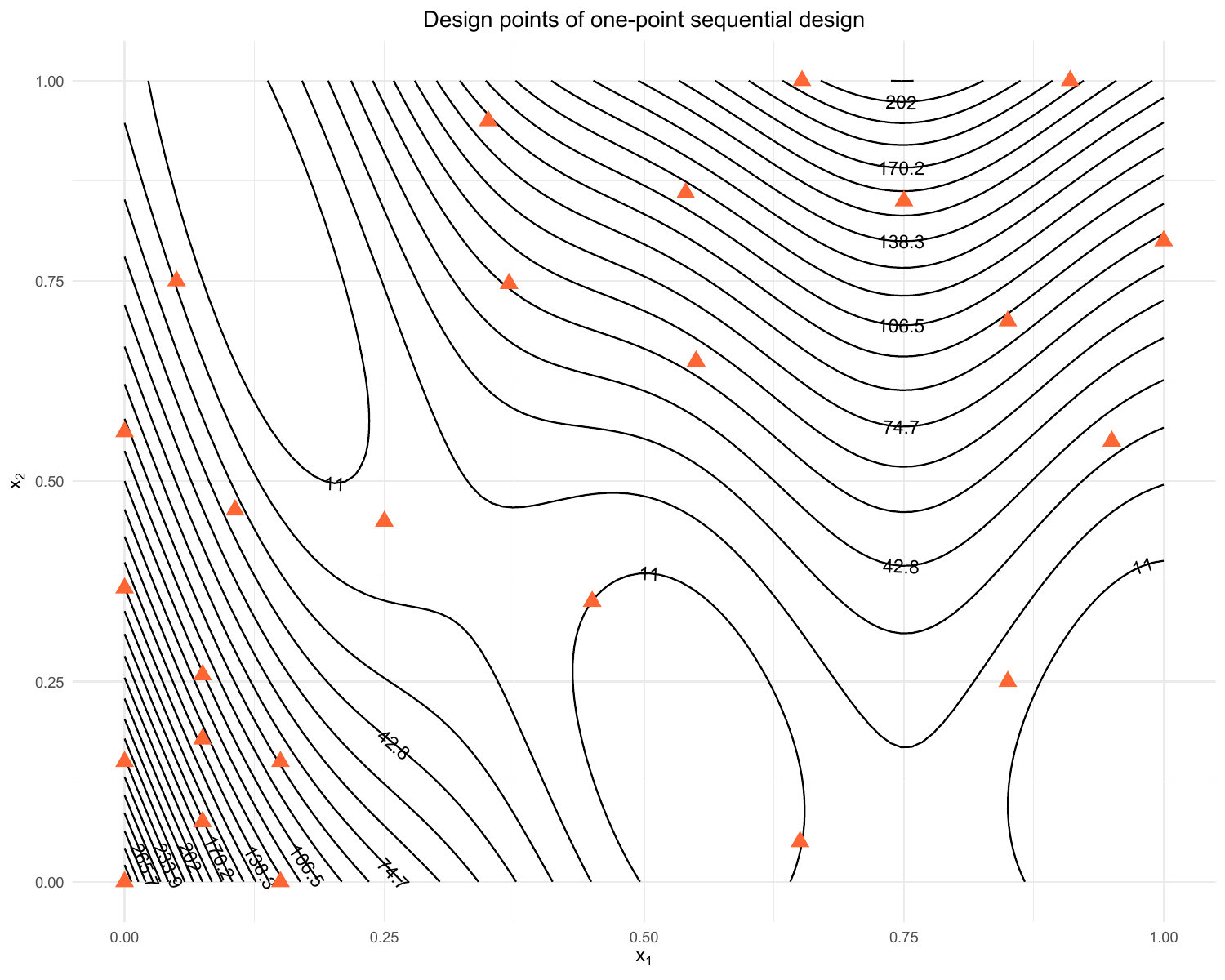}
		\caption{Design points of $\boldsymbol{D}_2$}
		\label{fig:sub2}
	\end{subfigure}
	% ???
	\caption{The distribution of design points in the uniform design and gradient-based sequential design.} 
	\label{fig:1}
\end{figure}

Figure \ref{fig:1} shows the response surface of the function \( f(x_1,x_2) \) defined in Equation \eqref{eq:2}, along with the points distributions of the uniform design $\boldsymbol{D}_1$ and the one-point sequential design $\boldsymbol{D}_2$, both consisting of $25$ design points. Subplot (a) of Figure \ref{fig:1} depicts the design points of the uniform design, which are denoted by circular markers. Subplot (b) of Figure \ref{fig:2} illustrates the design points of the one-point sequential design, represented by triangular markers. It can be clearly observed that the design points of \(\boldsymbol{D}_2\) are more densely distributed in regions with large gradient magnitudes, while still maintaining favorable space-filling properties of the overall design.

Figure \ref{fig:2} presents the root mean square error (RMSE) of the uniform design and the one-point sequential design for the numbers of points ranging from $11$ to $25$. For each case, the uniform design is directly regenerated, while the one-point sequential design is sequentially generated based on its previous design points. To quantify the fitting accuracy of the two designs, we calculate the RMSE using a $10000\times2$ test matrix \(\boldsymbol{X}_{\text{test}}\) generated directly via the ``randomLHS'' function from the ``lhs'' package in R, that is, 
$$RMSE=\sqrt{\frac{\sum_{\boldsymbol{x}\in\boldsymbol{X}_{test}}\left(f(\boldsymbol{x})-\hat{y}(\boldsymbol{x})\right)^2}{10000}}.$$

\begin{figure}[htbp]
	\centering
	\includegraphics[width=0.8\textwidth]{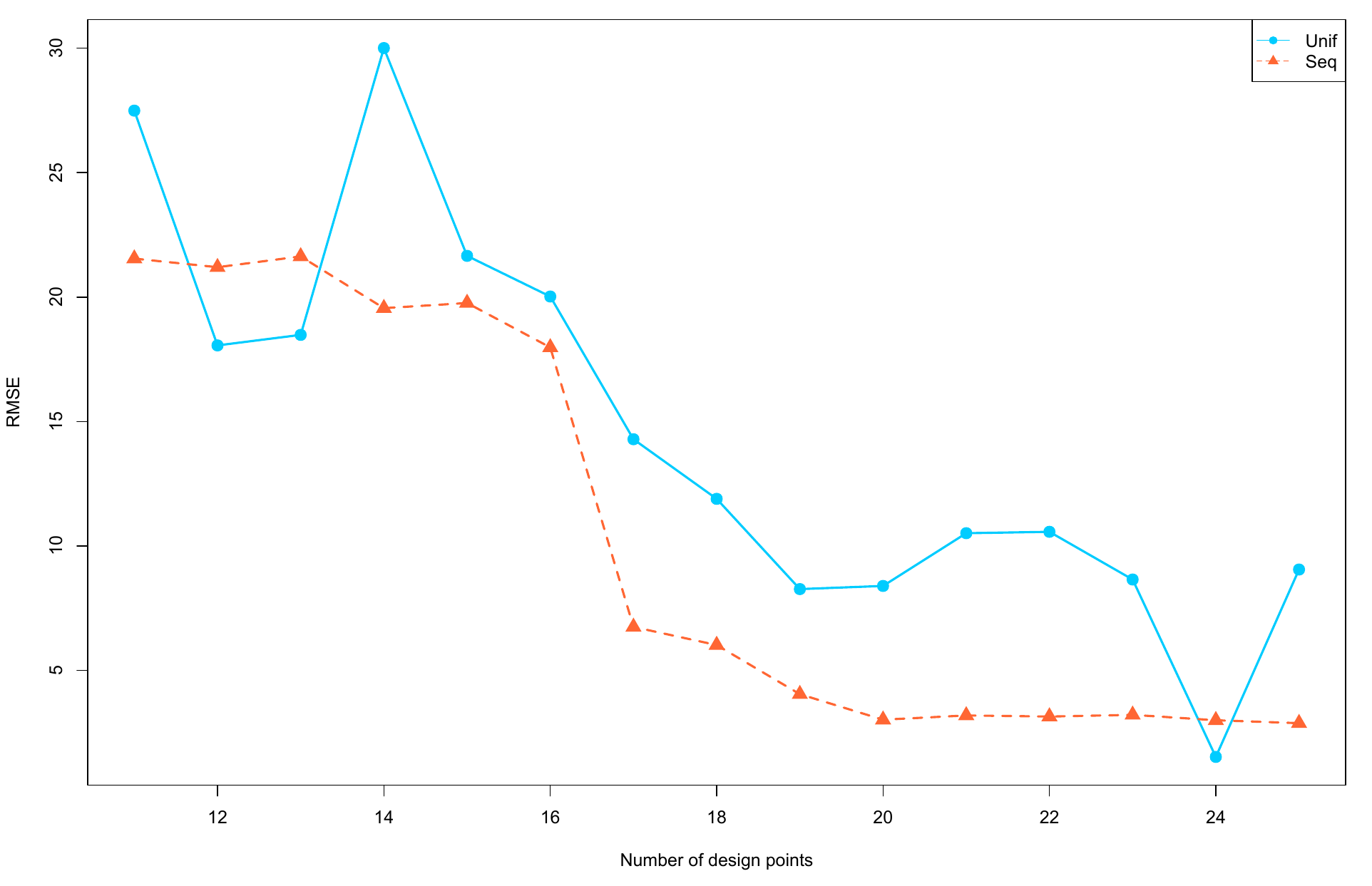}
	\caption{The RMSE of the uniform design and the gradient-based sequential design with different number of design points.}
	\label{fig:2}
\end{figure}

Figure \ref{fig:2} shows that when we utilize the information from each round of observations and incorporate new design points in regions with larger gradients, the RMSEs are lower than those obtained by the uniform design. This indicates that an efficient design should not only possess space-filling properties but also align as closely as possible with the distribution of gradient magnitudes. Moreover, in contrast to sequential designs, since uniform designs are regenerated in each round, using them to fit the model can lead to significant instability. Consequently, sequential designs are demonstrably superior to single-stage space-filling designs for this purpose. It is important to note that augmenting an existing design with new points does not always reduce the RMSE, as the new points may cause the model's predictive trends in certain regions to deviate further from the true response model.

\subsubsection{Specific expression of the gradient-based criterion}
Building on our prior discussion, the design points we select ought to maximize $\|\nabla f(\boldsymbol{x})\|$ while maintaining a sufficient distance from the existing design points. Let $\boldsymbol{X}$ denote the current design and $\boldsymbol{x}^*=\arg\min_{y\in\boldsymbol{X}}d(\boldsymbol{x},\boldsymbol{y})$. Therefore, an intuitive choice is 
\begin{equation}
	\label{eq:3}
	\phi(\boldsymbol{x})=\|\nabla f(\boldsymbol{x})\|\times d(\boldsymbol{x},\boldsymbol{x}^*),
\end{equation} 
where $d(\boldsymbol{x},\boldsymbol{x}^*)$ is the $L_2$-distance between $\boldsymbol{x}$ and $\boldsymbol{x}^*$. However, the true value of \( \|\nabla f(\boldsymbol{x})\| \) is unknown. Fortunately, we can compute the expectation of \( \|\nabla f(\boldsymbol{x})\|^2 \) based on the fitted Kriging model according to the following lemma.
\begin{lemma}[\citealp{Erickson:2021}]
	For a Gaussian process with correlation function $k(\boldsymbol{x},\boldsymbol{y})$ and variance $\tau^2$, we have
	\begin{equation*}
		E\left(\|\nabla f(\boldsymbol{x})\|^2\right)=\|\nabla\hat{y}_{\boldsymbol{X}}(\boldsymbol{x})\|^2+\tau^2tr(\nabla^2k(\boldsymbol{x},\boldsymbol{x})-\nabla r_{\boldsymbol{X}}^T(\boldsymbol{x})\boldsymbol{K}^{-1}(\boldsymbol{X})\nabla r_{\boldsymbol{X}}(\boldsymbol{x})),
	\end{equation*}
	where $\nabla\hat{y}_{\boldsymbol{X}}(\boldsymbol{x})=\nabla r^T_{\boldsymbol{X}}(\boldsymbol{x})\boldsymbol{K}^{-1}(\boldsymbol{X})\boldsymbol{Y}_{\boldsymbol{X}}$.
\end{lemma}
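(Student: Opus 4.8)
The plan is to exploit the fact that, conditionally on the observations $\boldsymbol{Y}_{\boldsymbol{X}}$, the process $f$ remains Gaussian, so that its gradient $\nabla f(\boldsymbol{x})$ is a Gaussian random vector whose mean is the gradient of the posterior mean and whose covariance is obtained by differentiating the posterior covariance kernel (the quantity underlying $\widehat{s^2_{\boldsymbol{X}}}$) once in each argument. Throughout, the expectation is read as the posterior expectation given $\boldsymbol{Y}_{\boldsymbol{X}}$. The identity will then follow from the elementary second-moment decomposition for a random vector $\boldsymbol{v}$,
\begin{equation*}
	E\left(\|\boldsymbol{v}\|^2\right)=\|E(\boldsymbol{v})\|^2+tr\left(Cov(\boldsymbol{v})\right),
\end{equation*}
applied to $\boldsymbol{v}=\nabla f(\boldsymbol{x})$, whereby the squared-mean term produces $\|\nabla\hat{y}_{\boldsymbol{X}}(\boldsymbol{x})\|^2$ and the trace term produces the remaining expression.

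First I would establish differentiability and the commutation of differentiation with conditioning. Since differentiation is a linear operation and $f$ is a mean-square differentiable Gaussian process---guaranteed by the twice continuous differentiability of $k$ for the Gaussian and Mat\'{e}rn kernels under consideration---the pair $(f,\nabla f)$ is jointly Gaussian, and conditioning on $\boldsymbol{Y}_{\boldsymbol{X}}$ preserves Gaussianity. Because both taking gradients and conditioning are linear, they commute, so the posterior mean of $\nabla f(\boldsymbol{x})$ equals the gradient of the kriging predictor. Differentiating $\hat{y}_{\boldsymbol{X}}(\boldsymbol{x})=\boldsymbol{r}_{\boldsymbol{X}}^T(\boldsymbol{x})\boldsymbol{K}^{-1}(\boldsymbol{X})\boldsymbol{Y}_{\boldsymbol{X}}$ with respect to $\boldsymbol{x}$ (only $\boldsymbol{r}_{\boldsymbol{X}}$ depends on $\boldsymbol{x}$) gives the stated $\nabla\hat{y}_{\boldsymbol{X}}(\boldsymbol{x})=\nabla\boldsymbol{r}_{\boldsymbol{X}}^T(\boldsymbol{x})\boldsymbol{K}^{-1}(\boldsymbol{X})\boldsymbol{Y}_{\boldsymbol{X}}$, supplying the first summand.

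Next I would compute the posterior covariance of $\nabla f(\boldsymbol{x})$. The posterior covariance kernel of $f$ is $C(\boldsymbol{x},\boldsymbol{x}')=\tau^2\left(k(\boldsymbol{x},\boldsymbol{x}')-\boldsymbol{r}_{\boldsymbol{X}}^T(\boldsymbol{x})\boldsymbol{K}^{-1}(\boldsymbol{X})\boldsymbol{r}_{\boldsymbol{X}}(\boldsymbol{x}')\right)$. Using the commutation argument again, the posterior covariance between $\partial f(\boldsymbol{x})/\partial x_i$ and $\partial f(\boldsymbol{x}')/\partial x'_j$ is the mixed second derivative $\partial^2 C(\boldsymbol{x},\boldsymbol{x}')/\partial x_i\,\partial x'_j$; evaluating at $\boldsymbol{x}'=\boldsymbol{x}$ yields the $(i,j)$ entry of $Cov(\nabla f(\boldsymbol{x}))$. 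Summing over the diagonal and recognizing the matrix of mixed derivatives of $k$ as $\nabla^2 k(\boldsymbol{x},\boldsymbol{x})$ and the mixed derivatives of the quadratic form as $\nabla\boldsymbol{r}_{\boldsymbol{X}}^T(\boldsymbol{x})\boldsymbol{K}^{-1}(\boldsymbol{X})\nabla\boldsymbol{r}_{\boldsymbol{X}}(\boldsymbol{x})$ gives the second summand $\tau^2 tr\left(\nabla^2 k(\boldsymbol{x},\boldsymbol{x})-\nabla\boldsymbol{r}_{\boldsymbol{X}}^T(\boldsymbol{x})\boldsymbol{K}^{-1}(\boldsymbol{X})\nabla\boldsymbol{r}_{\boldsymbol{X}}(\boldsymbol{x})\right)$. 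Substituting both pieces into the decomposition completes the argument.

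The hard part will be the rigorous justification that differentiation commutes with the expectation and covariance operators, that is, that $f$ is mean-square differentiable and that the derivative of the conditioned process has the same law as the conditional distribution of the derivative process. This rests on the smoothness of $k$ (twice continuous differentiability in each argument, which fails for the exponential kernel but holds here) together with dominated-convergence-type arguments allowing difference quotients to be passed through the integrals defining the moments. Once this regularity is secured, every remaining step is routine matrix calculus, so I would state the smoothness hypothesis explicitly and then carry out the differentiation and trace bookkeeping.
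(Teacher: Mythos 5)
Your proposal is correct: the paper itself gives no proof of this lemma (it is quoted from \citealp{Erickson:2021}), and your argument---conditional Gaussianity of $\nabla f(\boldsymbol{x})$ given $\boldsymbol{Y}_{\boldsymbol{X}}$, with mean $\nabla\hat{y}_{\boldsymbol{X}}(\boldsymbol{x})$ and covariance obtained by differentiating the posterior kernel $\tau^2\bigl(k(\boldsymbol{x},\boldsymbol{x}')-\boldsymbol{r}_{\boldsymbol{X}}^T(\boldsymbol{x})\boldsymbol{K}^{-1}(\boldsymbol{X})\boldsymbol{r}_{\boldsymbol{X}}(\boldsymbol{x}')\bigr)$ once in each argument, followed by the decomposition $E\|\boldsymbol{v}\|^2=\|E\boldsymbol{v}\|^2+tr(Cov(\boldsymbol{v}))$---is exactly the standard derivation in that cited source. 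Your explicit attention to mean-square differentiability (requiring the mixed second partials of $k$ to exist and be continuous, which rules out non-smooth kernels such as the exponential) is the right regularity hypothesis and is appropriately flagged rather than glossed over.
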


We may now substitute \( \sqrt{E\left(\|\nabla f(\boldsymbol{x})\|^2\right)} \) into \( \|\nabla f(\boldsymbol{x})\| \) in Equation \eqref{eq:3} to calculate \(\phi(\boldsymbol{x})\). It should be noted that there is an additional consideration. If a candidate design point exhibits a large gradient magnitude and is distant from existing design points, yet the variation in response values within its local neighborhood is negligible, such a point does not require excessive attention. This is due to the fact that points in this region exert minimal influence on RMSE. Accordingly, an additional term should be incorporated into \( \phi(\boldsymbol{x}) \) to quantify the local variability of \( f(\boldsymbol{x}) \) in the vicinity of \( \boldsymbol{x} \). We adopt $|\hat{y}_{\boldsymbol{X}}(\boldsymbol{x})-f(\boldsymbol{x}^*)|$ to denote the magnitude of response value variability in the neighborhood of \( \boldsymbol{x} \). A small value of this term indicates that $f(\boldsymbol{x})$ undergoes minimal variation near $\boldsymbol{x}$. Taking into account the previous discussions, we ultimately choose 
\begin{equation}
	\label{eq:4}
	\phi_{gra}(\boldsymbol{x})=\sqrt{E\left(\|\nabla f(\boldsymbol{x})\|^2\right)}\times d(\boldsymbol{x},\boldsymbol{x}^*)+|\hat{y}_{\boldsymbol{X}}(\boldsymbol{x})-f(\boldsymbol{x}^*)|.
\end{equation}
Furthermore, the theorem presented below establishes that the prediction error of the Kriging model can be approximately bounded by $\phi_{\text{gra}}(\boldsymbol{x})$ as defined in Equation \eqref{eq:4}.
\begin{theorem}
	\label{the:1}
	Let $f(\boldsymbol{x})$ be the smooth true model and $\hat{y}_{\boldsymbol{X}}(\boldsymbol{x})$ be the predicted value of the Kriging model at point \( \boldsymbol{x} \) based on \( \boldsymbol{X} \). Then for any $\boldsymbol{x}\in\mathcal{X}$, we have
	\begin{equation}
		\label{eq:5}
		|f(\boldsymbol{x})-\hat{y}_{\boldsymbol{X}}(\boldsymbol{x})|\leq\|\nabla f(\boldsymbol{t})\| \times d(\boldsymbol{x},\boldsymbol{x}^*)+|\hat{y}_{\boldsymbol{X}}(\boldsymbol{x})-f(\boldsymbol{x}^*)|,
	\end{equation}
	where $\boldsymbol{x}^*=\arg\min_{\boldsymbol{y}\in\boldsymbol{X}}d(\boldsymbol{x},\boldsymbol{y})$ and $\boldsymbol{t}\in\{\alpha\boldsymbol{x}+(1-\alpha)\boldsymbol{x}^*\,|\,0\leq\alpha\leq1\}$.
\end{theorem}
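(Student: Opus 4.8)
The plan is to bound the prediction error by inserting the true function value at the nearest design point $\boldsymbol{x}^*$ as an intermediate reference and then applying the triangle inequality. Concretely, I would first write
$$|f(\boldsymbol{x})-\hat{y}_{\boldsymbol{X}}(\boldsymbol{x})|\leq|f(\boldsymbol{x})-f(\boldsymbol{x}^*)|+|f(\boldsymbol{x}^*)-\hat{y}_{\boldsymbol{X}}(\boldsymbol{x})|.$$
By symmetry of the absolute value, the second summand is exactly the term $|\hat{y}_{\boldsymbol{X}}(\boldsymbol{x})-f(\boldsymbol{x}^*)|$ that appears on the right-hand side of Equation \eqref{eq:5}, so no further work is needed there. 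The entire task therefore reduces to bounding the first summand $|f(\boldsymbol{x})-f(\boldsymbol{x}^*)|$ by $\|\nabla f(\boldsymbol{t})\|\times d(\boldsymbol{x},\boldsymbol{x}^*)$ for a suitable $\boldsymbol{t}$ on the segment between $\boldsymbol{x}^*$ and $\boldsymbol{x}$.

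For that first summand I would restrict $f$ to the line segment joining $\boldsymbol{x}^*$ and $\boldsymbol{x}$. Defining $\varphi(\alpha)=f(\boldsymbol{x}^*+\alpha(\boldsymbol{x}-\boldsymbol{x}^*))$ for $\alpha\in[0,1]$, the smoothness of $f$ guarantees that $\varphi$ is continuously differentiable, with $\varphi(0)=f(\boldsymbol{x}^*)$ and $\varphi(1)=f(\boldsymbol{x})$. Applying the one-dimensional mean value theorem to $\varphi$ yields some $\alpha^*\in(0,1)$ with $\varphi(1)-\varphi(0)=\varphi'(\alpha^*)$, and the chain rule gives $\varphi'(\alpha^*)=\nabla f(\boldsymbol{t})^T(\boldsymbol{x}-\boldsymbol{x}^*)$, where $\boldsymbol{t}=\boldsymbol{x}^*+\alpha^*(\boldsymbol{x}-\boldsymbol{x}^*)$ is precisely a point of the form $\alpha^*\boldsymbol{x}+(1-\alpha^*)\boldsymbol{x}^*$ required by the statement. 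Hence $f(\boldsymbol{x})-f(\boldsymbol{x}^*)=\nabla f(\boldsymbol{t})^T(\boldsymbol{x}-\boldsymbol{x}^*)$, and a single application of the Cauchy--Schwarz inequality finishes the estimate:
$$|\nabla f(\boldsymbol{t})^T(\boldsymbol{x}-\boldsymbol{x}^*)|\leq\|\nabla f(\boldsymbol{t})\|\,\|\boldsymbol{x}-\boldsymbol{x}^*\|=\|\nabla f(\boldsymbol{t})\|\times d(\boldsymbol{x},\boldsymbol{x}^*).$$
Combining this with the triangle-inequality decomposition recovers Equation \eqref{eq:5}.

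I do not anticipate a genuine obstacle, since the displayed bound is in fact an exact consequence of the mean value theorem together with Cauchy--Schwarz rather than a delicate approximation; the word ``approximately'' in the surrounding discussion refers only to the later replacement of the unknown $\|\nabla f(\boldsymbol{t})\|$ by $\sqrt{E(\|\nabla f(\boldsymbol{x})\|^2)}$ when passing to the practical criterion $\phi_{gra}$ in Equation \eqref{eq:4}, and is not part of establishing \eqref{eq:5} itself. The one point I would check carefully is that the mean value theorem may legitimately be applied along the whole segment, i.e.\ that the segment lies inside the region where $f$ is differentiable; since the experimental region $\mathcal{X}$ is convex and both $\boldsymbol{x}$ and $\boldsymbol{x}^*$ belong to it, the segment stays within $\mathcal{X}$ and the smoothness hypothesis suffices, so this raises no real difficulty.
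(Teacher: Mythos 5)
Your proposal is correct and follows essentially the same route as the paper's own proof: the triangle inequality through $f(\boldsymbol{x}^*)$, the Lagrange mean value theorem along the segment from $\boldsymbol{x}^*$ to $\boldsymbol{x}$, and the Cauchy--Schwarz bound $|\nabla f(\boldsymbol{t})^T(\boldsymbol{x}-\boldsymbol{x}^*)|\leq\|\nabla f(\boldsymbol{t})\|\,d(\boldsymbol{x},\boldsymbol{x}^*)$. Your version is in fact slightly more careful than the paper's, since you make the one-dimensional parameterization $\varphi(\alpha)$ explicit and verify that the segment stays inside the domain where $f$ is smooth.
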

The right-hand side of Equation \eqref{eq:5} in Theorem \ref{the:1} has a similar expression to \(\phi_{gra}(\boldsymbol{x})\) in Equation \eqref{eq:4}. When \( d(\boldsymbol{x},\boldsymbol{x}^*) \) is small, we can consider \( \|\nabla f(\boldsymbol{t})\| \) and \( \sqrt{E\left(\|\nabla f(\boldsymbol{x})\|^2\right)} \) to be approximately equal. From this perspective, the point selected based on \(\phi_{gra}(\boldsymbol{x})\) each time is the one where the upper bound of the error $|f(\boldsymbol{x})-\hat{y}_{\boldsymbol{X}}(\boldsymbol{x})|$ is the largest. In this way, we gradually reduce the error to achieve the goal of global fitting.
%Now we can provide the specific steps for constructing one-point sequential design based on Equation \eqref{eq:4} and Algorithm \ref{alg:1}. First, we select a space-filling design $\boldsymbol{X}_0$ as the initial design. Denote the number of rows in \(\boldsymbol{X}_0\) as \(n_0\). Typically, $n_0$ is selected to range between $3m$ and $7m$. For the $i$-th iteration in Algorithm \ref{alg:1}, let $\boldsymbol{X}_i=(\boldsymbol{x}_1,\ldots,\boldsymbol{x}_{n_0+i})^T$ be the current design and select the point \(\boldsymbol{x}\) that maximizes \(\phi(\boldsymbol{x})\) in Equation \eqref{eq:4}, that is, $\boldsymbol{B}_i=\arg\max_{\boldsymbol{x}\in\mathcal{X}}\phi(\boldsymbol{x})$ where $\mathcal{X}$ is the experimental region. 

Figure \ref{fig:3} presents the addition order of each design point in \( \boldsymbol{D}_2 \) from Figure \ref{fig:1}, which is the sequential design constructed via the framework detailed in Algorithm \ref{alg:1} by employing \(\phi_{gra}\) as the selection criterion with $10$ initial design points. Figure \ref{fig:3} clearly shows that the new design points are incrementally added in a manner that aligns with the gradient magnitude of the response surface, which effectively reduces the average fitting error. For instance, the gradient magnitude is relatively large in the bottom-left region, resulting in a denser distribution of augmented points; conversely, the gradient magnitude is smaller in the central region, leading to a sparser distribution of added points.
\begin{figure}[htbp]
	\centering
	\includegraphics[width=0.8\textwidth]{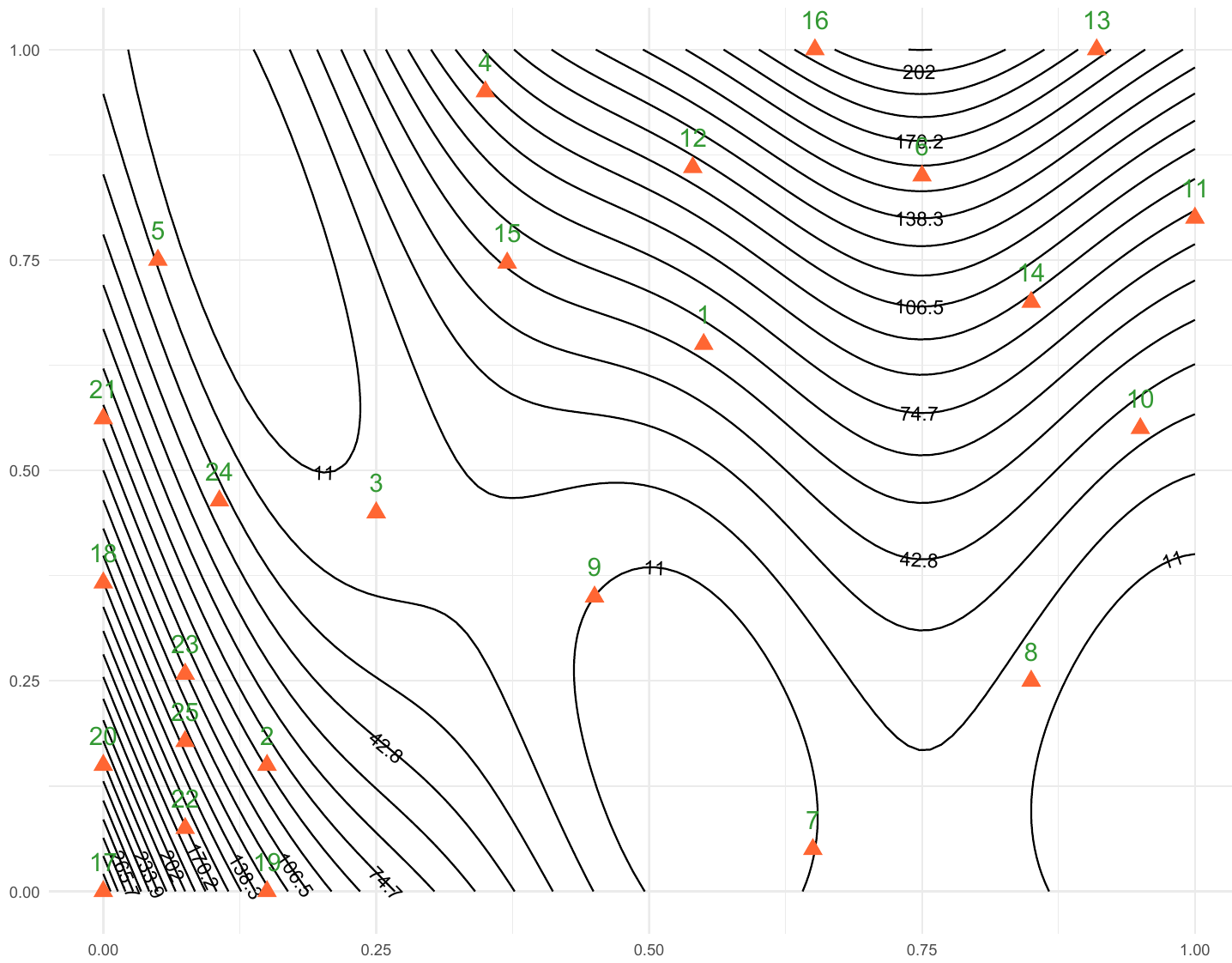}
	\caption{The order of design points in $\boldsymbol{D}_2$.}
	\label{fig:3}
\end{figure}

\subsection{Variance-based criterion}
When our objective is to minimize the maximum fitting error, we may approach this problem from an alternative perspective. \cite{Sacks:1988} proposed a criterion that selects the point minimizing the maximum prediction variance $\widehat{s_{\boldsymbol{X}}^2}(\boldsymbol{x})$ for all $\boldsymbol{x}\in\mathcal{X}$. However, relying solely on the prediction variance of design point estimates is insufficient to fully characterize the overall estimation error. The following theorem establishes that the estimation error at each point can be regulated not only by $\widehat{s_{\boldsymbol{X}}^2}(\boldsymbol{x})$ but also by the variance of the gradient estimate.

\begin{theorem}
	\label{the:2}
	Let $\mathcal{X}\subseteq\mathbb{R}$ be open, $\mathcal{N}_k(\mathcal{X})$ be the native Hilbert function space corresponding to the symmetric positive
	definite kernel $k(\boldsymbol{x},\boldsymbol{y}) \subseteq C^{2m}(\mathcal{X}, \mathcal{X}):\mathcal{X}\times\mathcal{X}\rightarrow\mathbb{R}$ and $f$ be the smooth true function. Denote the \(i\)-th diagonal element of $\nabla^2k(\boldsymbol{x},\boldsymbol{x})-\nabla r_{\boldsymbol{X}}^T(\boldsymbol{x})\boldsymbol{K}^{-1}(\boldsymbol{X})\nabla r_{\boldsymbol{X}}(\boldsymbol{x})$ as \(g_i(\boldsymbol{x})\). Then we have
	\begin{enumerate}[1.]
		\item $|f(\boldsymbol{x})-\hat{y}_{\boldsymbol{X}}(\boldsymbol{x})|\leq\sqrt{k(\boldsymbol{x},\boldsymbol{x})-\boldsymbol{r}_{\boldsymbol{X}}^T(\boldsymbol{x})\boldsymbol{K}^{-1}(\boldsymbol{X})\boldsymbol{r}_{\boldsymbol{X}}(\boldsymbol{x})}\times|f|_{\mathcal{N}_k(\mathcal{X})} = \sqrt{\widehat{s_{\boldsymbol{X}}^2}(\boldsymbol{x}) / \tau^2}|f|_{\mathcal{N}_k(\mathcal{X})}$,
		\item $|f(\boldsymbol{x})-\hat{y}_{\boldsymbol{X}}(\boldsymbol{x})|\leq\sqrt{\sum_{i=1}^mg_i(\boldsymbol{t})}\times d(\boldsymbol{x},\boldsymbol{x}^*)\times|f|_{\mathcal{N}_k(\mathcal{X})}$,
	\end{enumerate}
	where $\boldsymbol{t}\in\{\alpha\boldsymbol{x}+(1-\alpha)\boldsymbol{x}^*\,|\,0\leq\alpha\leq1\}$ and $|f|_{\mathcal{N}_k(\mathcal{X})}$ is the norm of function \( f \) in the native space $\mathcal{N}_k(\mathcal{X})$.
\end{theorem}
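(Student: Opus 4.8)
The plan is to work entirely in the native (reproducing kernel Hilbert) space $\mathcal{N}_k(\mathcal{X})$ and exploit two facts: the reproducing property $g(\boldsymbol{x})=\langle g,\,k(\cdot,\boldsymbol{x})\rangle_{\mathcal{N}_k}$ for every $g\in\mathcal{N}_k(\mathcal{X})$, and the fact that $\hat{y}_{\boldsymbol{X}}$ is the minimum-norm interpolant of $f$ on $\boldsymbol{X}$, so that $\hat{y}_{\boldsymbol{X}}(\boldsymbol{x})=\boldsymbol{r}_{\boldsymbol{X}}^T(\boldsymbol{x})\boldsymbol{K}^{-1}(\boldsymbol{X})\boldsymbol{Y}_{\boldsymbol{X}}$ with $\boldsymbol{Y}_{\boldsymbol{X}}=(f(\boldsymbol{x}_1),\ldots,f(\boldsymbol{x}_n))^T$ and $\hat{y}_{\boldsymbol{X}}(\boldsymbol{x}_j)=f(\boldsymbol{x}_j)$. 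Both inequalities then follow from Cauchy--Schwarz applied to a \emph{residual representer} whose squared norm collapses, via $\boldsymbol{K}^{-1}\boldsymbol{K}\boldsymbol{K}^{-1}=\boldsymbol{K}^{-1}$, to exactly the variance-type quantities on the right-hand sides.

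For the first inequality I would write $\hat{y}_{\boldsymbol{X}}(\boldsymbol{x})=\sum_j\lambda_j(\boldsymbol{x})f(\boldsymbol{x}_j)$ with weights $\boldsymbol{\lambda}(\boldsymbol{x})=\boldsymbol{K}^{-1}(\boldsymbol{X})\boldsymbol{r}_{\boldsymbol{X}}(\boldsymbol{x})$ and represent the error as $f(\boldsymbol{x})-\hat{y}_{\boldsymbol{X}}(\boldsymbol{x})=\langle f,\,k(\cdot,\boldsymbol{x})-\sum_j\lambda_j(\boldsymbol{x})k(\cdot,\boldsymbol{x}_j)\rangle_{\mathcal{N}_k}$. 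Cauchy--Schwarz produces the factor $|f|_{\mathcal{N}_k(\mathcal{X})}$, and expanding the squared norm of the residual kernel using $\langle k(\cdot,\boldsymbol{a}),k(\cdot,\boldsymbol{b})\rangle_{\mathcal{N}_k}=k(\boldsymbol{a},\boldsymbol{b})$ gives $k(\boldsymbol{x},\boldsymbol{x})-2\boldsymbol{\lambda}^T\boldsymbol{r}_{\boldsymbol{X}}+\boldsymbol{\lambda}^T\boldsymbol{K}\boldsymbol{\lambda}$, which reduces to $k(\boldsymbol{x},\boldsymbol{x})-\boldsymbol{r}_{\boldsymbol{X}}^T\boldsymbol{K}^{-1}\boldsymbol{r}_{\boldsymbol{X}}$ after substituting $\boldsymbol{\lambda}=\boldsymbol{K}^{-1}\boldsymbol{r}_{\boldsymbol{X}}$. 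This is the squared power function, and the stated equality with $\widehat{s_{\boldsymbol{X}}^2}(\boldsymbol{x})/\tau^2$ is immediate from the definition of $\widehat{s_{\boldsymbol{X}}^2}$.

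For the second inequality I would first reduce the pointwise error to a gradient error by the mean value theorem: since $\hat{y}_{\boldsymbol{X}}$ interpolates $f$ at $\boldsymbol{x}^*\in\boldsymbol{X}$, the scalar error $e(\boldsymbol{x})=f(\boldsymbol{x})-\hat{y}_{\boldsymbol{X}}(\boldsymbol{x})$ satisfies $e(\boldsymbol{x}^*)=0$, so applying the mean value theorem to $e$ restricted to the segment yields a single $\boldsymbol{t}$ with $e(\boldsymbol{x})=\nabla e(\boldsymbol{t})^T(\boldsymbol{x}-\boldsymbol{x}^*)$, whence $|e(\boldsymbol{x})|\leq\|\nabla e(\boldsymbol{t})\|\,d(\boldsymbol{x},\boldsymbol{x}^*)$ by Cauchy--Schwarz. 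It then remains to control each component $\partial_i e(\boldsymbol{t})=\partial_i f(\boldsymbol{t})-\partial_i\hat{y}_{\boldsymbol{X}}(\boldsymbol{t})$. Using the differentiated reproducing property $\partial_i f(\boldsymbol{t})=\langle f,\,\partial_{y_i}k(\cdot,\boldsymbol{y})|_{\boldsymbol{y}=\boldsymbol{t}}\rangle_{\mathcal{N}_k}$ together with $\partial_i\hat{y}_{\boldsymbol{X}}(\boldsymbol{t})=\boldsymbol{\rho}_i^T\boldsymbol{K}^{-1}\boldsymbol{Y}_{\boldsymbol{X}}$, where $\boldsymbol{\rho}_i$ is the $i$-th column of $\nabla r_{\boldsymbol{X}}(\boldsymbol{t})$, I would again write the componentwise error as $\langle f,\,\partial_{y_i}k(\cdot,\boldsymbol{t})-\sum_j(\boldsymbol{K}^{-1}\boldsymbol{\rho}_i)_j k(\cdot,\boldsymbol{x}_j)\rangle_{\mathcal{N}_k}$ and apply Cauchy--Schwarz. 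The squared norm of that residual collapses exactly as before to $[\nabla^2 k(\boldsymbol{t},\boldsymbol{t})]_{ii}-\boldsymbol{\rho}_i^T\boldsymbol{K}^{-1}\boldsymbol{\rho}_i=g_i(\boldsymbol{t})$, giving $|\partial_i e(\boldsymbol{t})|\leq\sqrt{g_i(\boldsymbol{t})}\,|f|_{\mathcal{N}_k(\mathcal{X})}$; summing over $i$ yields $\|\nabla e(\boldsymbol{t})\|\leq\sqrt{\sum_{i=1}^m g_i(\boldsymbol{t})}\,|f|_{\mathcal{N}_k(\mathcal{X})}$, and combining with the mean value bound delivers the claim (here the sum $\sum_{i=1}^m$ forces the intended reading $\mathcal{X}\subseteq\mathbb{R}^m$).

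The main obstacle will be the rigorous justification of the differentiated reproducing property and the interchange of differentiation with the native-space inner product: one must verify that the representers $\partial_{y_i}k(\cdot,\boldsymbol{y})|_{\boldsymbol{y}=\boldsymbol{t}}$ genuinely belong to $\mathcal{N}_k(\mathcal{X})$ and that $\langle\partial_{y_i}k(\cdot,\boldsymbol{t}),\partial_{y_j}k(\cdot,\boldsymbol{t})\rangle_{\mathcal{N}_k}=\partial_{x_i}\partial_{y_j}k(\boldsymbol{x},\boldsymbol{y})|_{\boldsymbol{x}=\boldsymbol{y}=\boldsymbol{t}}$, which is precisely where the $C^{2m}$ smoothness of the kernel and the openness of $\mathcal{X}$ are needed (they also guarantee that $e$ is $C^1$, so the mean value theorem applies). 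By contrast, the matrix algebra that reduces each residual norm to a variance term is entirely routine, being the same computation carried out in the first inequality but applied to the differentiated kernel.
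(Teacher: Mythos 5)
Your proof is correct, and its skeleton coincides with the paper's: part 1 is the standard power-function bound, and part 2 uses exactly the paper's decomposition --- set $h=f-\hat{y}_{\boldsymbol{X}}$, note $h(\boldsymbol{x}^*)=0$, apply the mean value theorem on the segment to get $|h(\boldsymbol{x})|\leq\|\nabla h(\boldsymbol{t})\|\,d(\boldsymbol{x},\boldsymbol{x}^*)$, bound each component $\partial_i h(\boldsymbol{t})$ by a derivative power function, and collapse that power function to $g_i(\boldsymbol{t})$ via $\boldsymbol{K}^{-1}\boldsymbol{K}\boldsymbol{K}^{-1}=\boldsymbol{K}^{-1}$. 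The one genuine difference is at the level of the key lemma: the paper invokes Theorem 11.4 of Wendland (2004) (restated as its Lemma A1) as a black box for the sampling inequality $\left|D^\alpha f(\boldsymbol{x})-D^\alpha\hat{y}_{\boldsymbol{X}}(\boldsymbol{x})\right|\leq P^{(\alpha)}_{k,\boldsymbol{X}}(\boldsymbol{x})\,|f|_{\mathcal{N}_k(\mathcal{X})}$ and then only performs the matrix algebra identifying $\bigl[P^{(\alpha)}_{k,\boldsymbol{X}}\bigr]^2$ with the variance-type quantities, whereas you re-derive that inequality from scratch via the (differentiated) reproducing property and Cauchy--Schwarz applied to the residual representer $k(\cdot,\boldsymbol{x})-\sum_j\lambda_j(\boldsymbol{x})k(\cdot,\boldsymbol{x}_j)$ and its derivative analogue. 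Your route is more self-contained and makes transparent where each hypothesis bites, but it must carry the technical load that the citation absorbs: verifying that $\partial_{y_i}k(\cdot,\boldsymbol{t})\in\mathcal{N}_k(\mathcal{X})$ and that $\langle\partial_{y_i}k(\cdot,\boldsymbol{t}),\partial_{y_i}k(\cdot,\boldsymbol{t})\rangle_{\mathcal{N}_k}=\partial_{x_i}\partial_{y_i}k(\boldsymbol{x},\boldsymbol{y})\big|_{\boldsymbol{x}=\boldsymbol{y}=\boldsymbol{t}}$, facts you correctly flag as the main obstacle and which are precisely the content of the Wendland result the paper cites, so a complete write-up would either prove these or cite the same source and land where the paper starts. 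Two small merits of your version: your mean value argument on the scalar restriction $\alpha\mapsto h(\boldsymbol{x}^*+\alpha(\boldsymbol{x}-\boldsymbol{x}^*))$ is cleaner than the paper's Taylor-expansion phrasing with an $O(\cdot)^2$ remainder that it then silently discards, and you rightly observe that the statement's $\mathcal{X}\subseteq\mathbb{R}$ must read $\mathcal{X}\subseteq\mathbb{R}^m$ for the sum $\sum_{i=1}^m g_i(\boldsymbol{t})$ to make sense (as the paper's own Lemma A1 indeed assumes).
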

The expression in Theorem \ref{the:2} can be decomposed into two components: one component is governed by the variance of design point estimation or the variance of design point gradient estimation, which is the controllable part; the other part is determined by the true function \( f \) and the correlation function in the Gaussian process, which can be regarded as systematic error and is beyond our control. For the second inequality in Theorem \ref{the:2}, we can substitute $\boldsymbol{t}$ with $\boldsymbol{x}$ approximately. Based on the previous discussion, we can present the specific expression for the variance-based criterion as
$$\phi_{var}(\boldsymbol{x})=\min\left(\sqrt{k(\boldsymbol{x},\boldsymbol{x})-\boldsymbol{r}_{\boldsymbol{X}}^T(\boldsymbol{x})\boldsymbol{K}^{-1}(\boldsymbol{X})\boldsymbol{r}_{\boldsymbol{X}}(\boldsymbol{x})},\sqrt{\sum_{i=1}^mg_i(\boldsymbol{x})}\times d(\boldsymbol{x},\boldsymbol{x}^*)\right).$$
In each iteration, we gradually control the prediction error by finding the point $\boldsymbol{x}$ that maximizes $\phi_{var}$. In the next section, we will present an algorithmic framework for constructing batch sequential designs, which is applicable to any single-point sequential criterion.

\section{Cluster-based top-$b$ sequential design}\label{sec:4}
Earlier in this study, we elaborated on the detailed procedures and associated criteria for one-point sequential design. However, in practical scenarios---for instance, when experimental observations require substantial resource investment, yet the cost differential between observing a single point and multiple points is negligible---adopting a one-point-at-a-time approach to select and observe design points proves to be highly time-consuming and resource-prohibitive. When the batch \( b \) in Algorithm \ref{alg:1} is greater than $1$, we cannot simply select the \( b \) points that maximize the value of \( \phi(\boldsymbol{x}) \). This is because \( \phi(\boldsymbol{x}) \) is a continuous function, and these \( b \) points would inevitably cluster around the point that maximizes \( \phi \). %Therefore, we can adopt the idea of uniform design to ensure that the points in each batch possess certain space-filling properties.

In fact, in each iteration when we search for $\boldsymbol{x}_{n_0+i}=\arg\max_{\boldsymbol{x}\in\mathcal{X}}\phi(\boldsymbol{x})$ in a one-point sequential design, we can also obtain more information. Due to the continuity of \( \phi(\boldsymbol{x}) \), the points that maximize it tend to cluster together. In this situation, we can search for another local maximum, which also has a relatively large \( \phi \) value and is not in the vicinity of \( \boldsymbol{x}_i \). Then the point selected in the next iteration may well be the second local maximum of $\phi(\boldsymbol{x})$ in the current iteration, since the latter is likely the second most informative point besides the point that maximizes $\phi(\boldsymbol{x})$. We will take the $\phi_{gra}$ criterion as an example to illustrate this via the following experiments. Similar conclusions are valid for most of the remaining criteria, and thus they are not presented herein.

Let $f(x_1,x_2)$ be the function denoted in Equation \eqref{eq:2}. Assume that $n_0=10$ and $\boldsymbol{X}_0$ is a uniform design. To reduce computation time, we can discretize \( \mathcal{X}=[0,1]^2 \) using a uniform design. Let \( \boldsymbol{D}_{all} \) be a uniform design with \( n_{all} \) rows, which is also generated by the ``GenUD'' function from the ``UniDOE'' R package. In this way, we only need to compute \( \phi_{gra}(\boldsymbol{x}) \) on \( \boldsymbol{D}_{all} \) and select \( \boldsymbol{B}_i \) from \( \boldsymbol{D}_{all} \). Here we set $n_{all}=1000$.
\begin{figure}[htbp]
	\centering
	\begin{subfigure}[t]{0.32\textwidth}
		\centering
		\includegraphics[width=\textwidth]{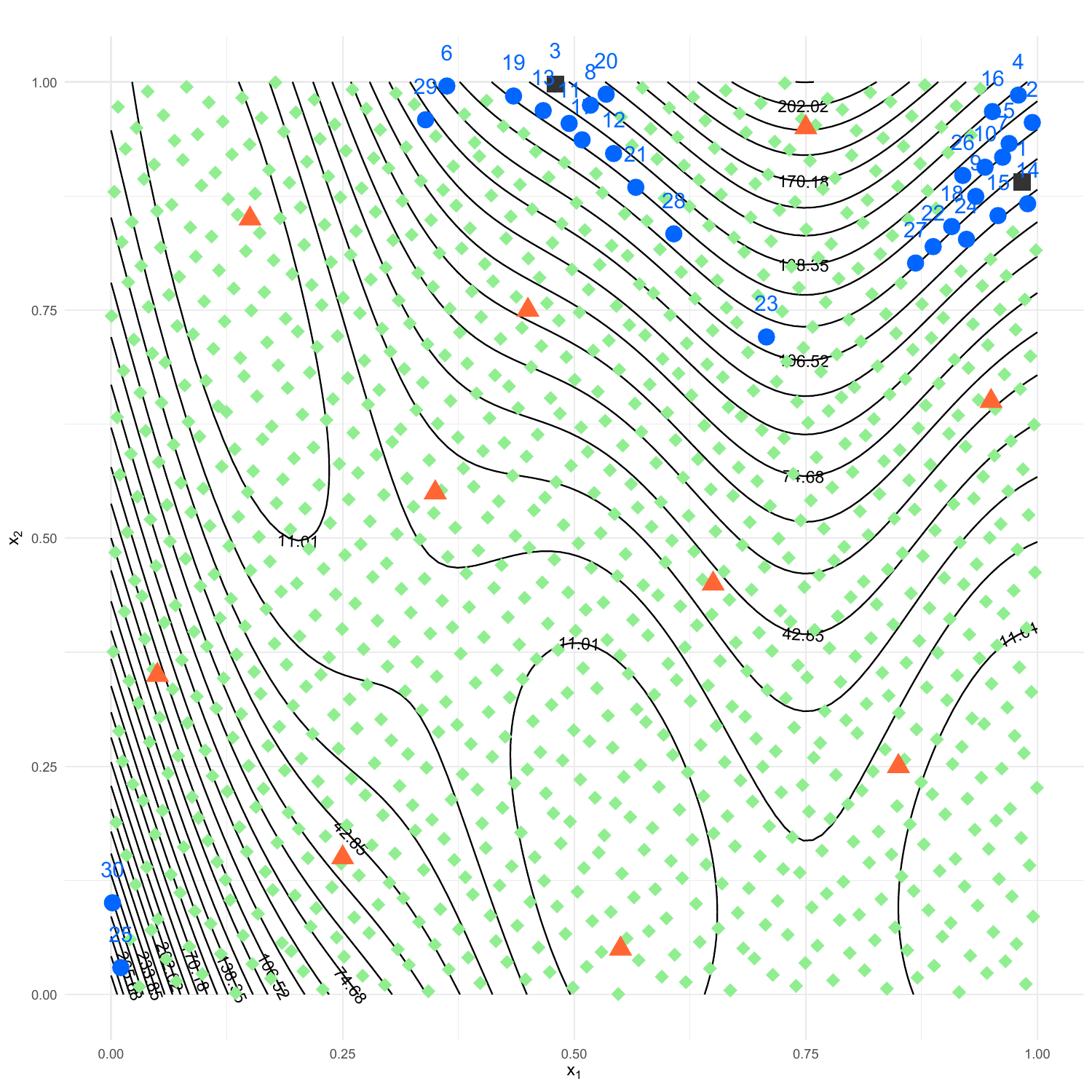}
		\caption{One-point sequential design with $10$ points.}
		\label{fig:fig41}
	\end{subfigure}
	\hfill
	\begin{subfigure}[t]{0.32\textwidth}
		\centering
		\includegraphics[width=\textwidth]{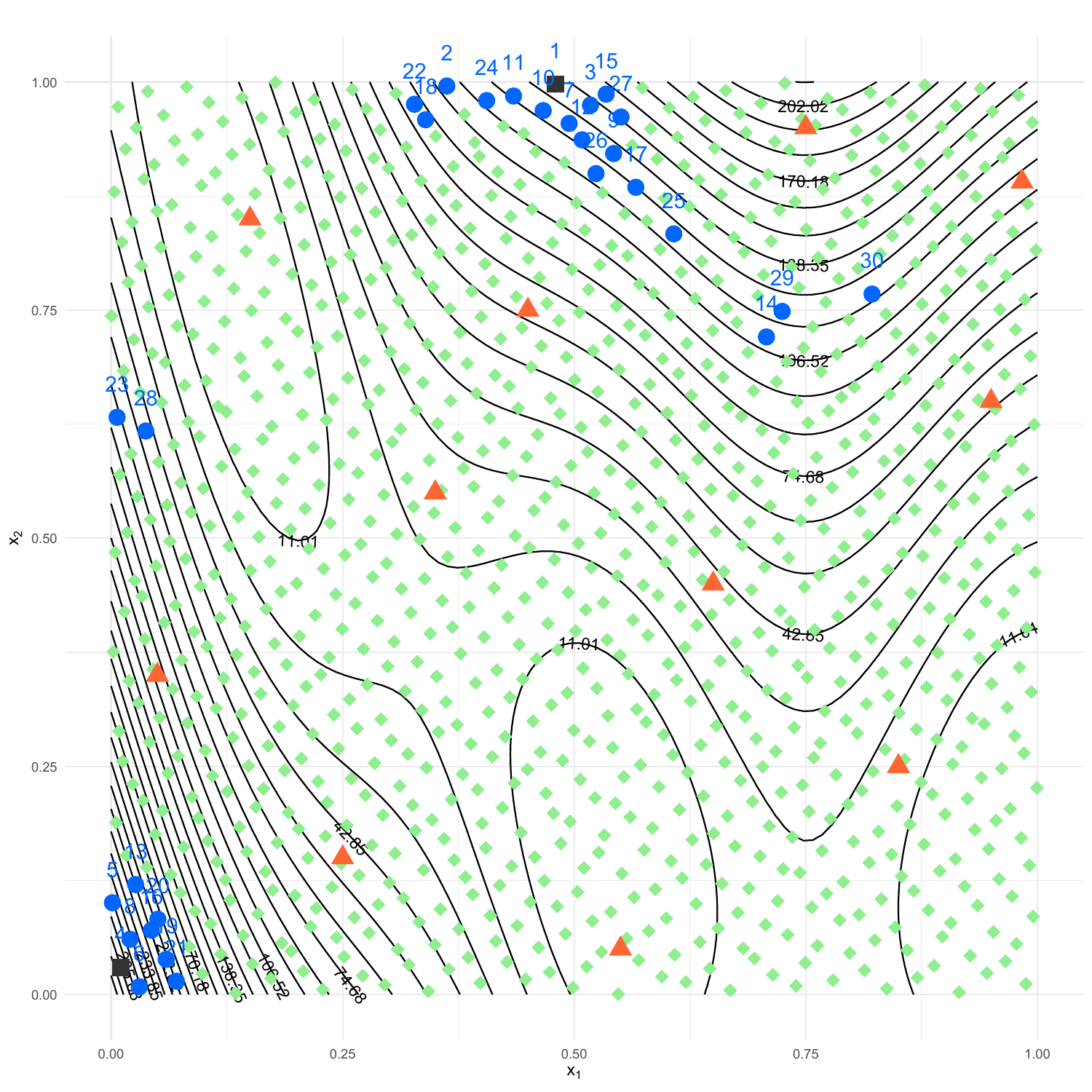}
		\caption{One-point sequential design with $11$ points.}
		\label{fig:fig42}
	\end{subfigure}
	\hfill
	\begin{subfigure}[t]{0.32\textwidth}
		\centering
		\includegraphics[width=\textwidth]{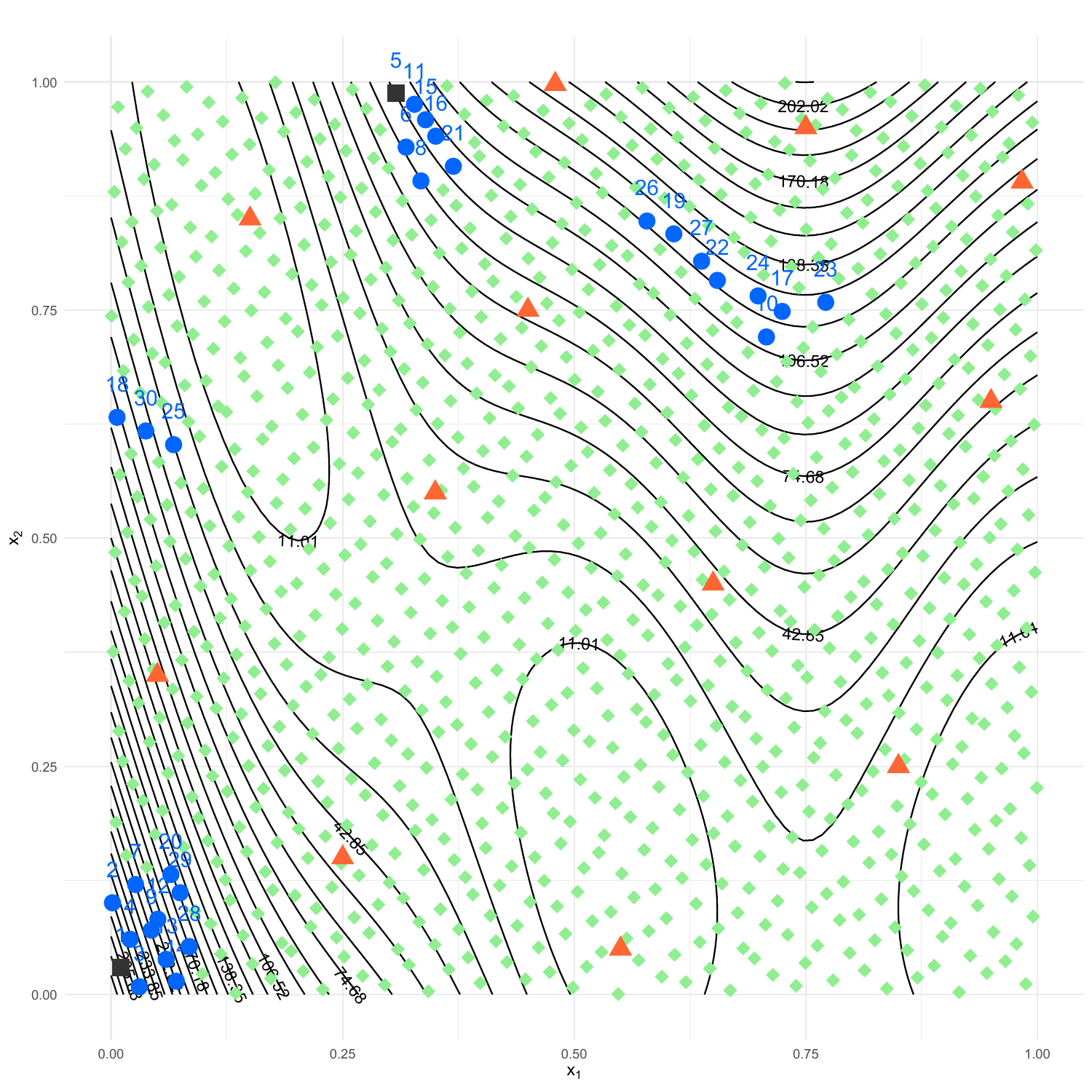}
		\caption{One-point sequential design with $12$ points.}
		\label{fig:fig43}
	\end{subfigure}
	\caption{The one-point sequential design and the order of values of \( \phi(\boldsymbol{x}) \) on \( \boldsymbol{D}_{all} \).}
	\label{fig:4}
\end{figure}

Figure \ref{fig:4} shows the points of one-point sequential design $\boldsymbol{X}_i$ and the order of the $\phi_{gra}$ values for the points in $\boldsymbol{D}_{all}$. The three subplots in Figure \ref{fig:4} correspond to the cases where the number of rows in \( \boldsymbol{X}_i \) is $10$, $11$, and $12$, respectively. In Figure \ref{fig:4}, the points of the current design \( \boldsymbol{X}_i \) are represented by red triangles, the points in \( \boldsymbol{D}_{all}\) are represented by green diamonds, and the blue circles represent the top $30$ points in \( \boldsymbol{D}_{all} \) that maximize \( \phi_{gra}(\boldsymbol{x}) \). The number above each point indicates its order. The black squares represent the points selected by clustering the points in \( \boldsymbol{D}_{all} \) based on the value of \( \phi_{gra}(\boldsymbol{x}) \) and choosing the points with the largest \( \phi_{gra} \) values in the first two clusters. In each iteration, the point labeled $1$ is selected as $\boldsymbol{B}_i$.

From Figure \ref{fig:4}, we can observe that, as noted earlier, the points maximizing $\phi_{gra}$ tend to exhibit a clustering pattern. In Figure \ref{fig:4}(a), the top two clusters maximizing $\phi_{gra}$ are represented by the point labeled $1$ and the point labeled $3$, respectively. For the one-point sequential design, the point labeled $1$ would typically be selected as the observation candidate. However, it becomes evident that after observing the point labeled $1$ and recalculating $\phi_{gra}$ with the updated data, the point labeled $3$ in Figure \ref{fig:4}(a) becomes the new maximizer of $\phi_{gra}(\boldsymbol{x})$. On this basis, we can directly select both the point labeled $1$ and the point labeled $3$ for simultaneous observation in the first iteration. This batch design selection strategy not only curtails the cost incurred by additional iterative observations but also yields outcomes that are comparable to those of the one-point sequential design. This is the main idea of the cluster-based top-$b$ sequential design.

In the cluster-based top-$b$ sequential design, the manner of partitioning the clusters stands as a pivotal element. Algorithm \ref{alg:2} shows the specific steps for partitioning the clusters of $\boldsymbol{D}_{all}$. First, the points in $\boldsymbol{D}_{all}$ are arranged in descending order of their \(\phi\) values, denoted as \(\boldsymbol{x}_{(1)}, \ldots, \boldsymbol{x}_{(n_{all})}\), with \(\boldsymbol{x}_{(1)}\) assigned to the first cluster. Then, for each point from \(\boldsymbol{x}_{(2)}\) to \(\boldsymbol{x}_{(n_{all})}\), it is determined whether it belongs to any existing cluster. If it is found to be part of an existing cluster, it is incorporated into that cluster; otherwise, it is assigned to a new cluster. To determine whether a point $\boldsymbol{x}$ belongs to an existing cluster, a set of rules is applied. If the cluster contains only one point, it is checked whether the cube formed by this point and \(\boldsymbol{x}\) encompasses more than \(\alpha\) points in $\boldsymbol{D}_{all}$; if so, they are considered not to belong to the same cluster. If the cluster contains more than one point, two calculations are performed: the average distance of each point in the cluster to the cluster centroid, as well as the distance between point \(\boldsymbol{x}\) and the centroid. If the distance from point \(\boldsymbol{x}\) to the centroid surpasses \(\beta\) times the average distance, the points are considered to be from different clusters. Once the number of clusters reaches $b$, the clustering process can be terminated, and the first point of each cluster is marked as the point for observation. It is worth noting that when the number of clusters is less than \( b \), we can decrease the value of $\alpha$ and repeat the clustering operation until the number of clusters attains \( b \). The points represented by the black squares in Figure \ref{fig:4} are those obtained by setting \(\alpha = 10\), \(\beta = 5\), and \(b = 2\) in Algorithm \ref{alg:2}.
\begin{algorithm}[htbp]
	\caption{The selection of $\boldsymbol{B}_i$ in top-$b$ sequential design}
	\label{alg:2}
	\begin{algorithmic}[1]
		\State \textbf{Input:} $\boldsymbol{D}_{all}$, $\phi(\boldsymbol{x})$, $b$, $\alpha$, $\beta$
		\State Sort the points in $\boldsymbol{D}_{all}$ by their \(\phi\) values in descending order and denoted them as \(\boldsymbol{x}_{(1)}, \ldots, \boldsymbol{x}_{(n_{all})}\)
		\State $Cluster=\{\{\boldsymbol{x}_{(1)}\}\}$
		\For{$i = 2$ \textbf{to} $n_{all}$}
		\If{the number of clusters in $Cluster$ is $b$}
		\State \textbf{break}
		\EndIf
		\For{each cluster $C$ in $Cluster$}
		\If{the number of points in $C$ is greater than $1$}
		\State $\bar{d}\leftarrow$ the average distance of each point in $C$ to the cluster centroid
		\State $d\leftarrow$ the distance between point \(\boldsymbol{x}\) and the centroid
		\If{$d<\beta\bar{d}$}
		\State Assign \( \boldsymbol{x}_{(i)} \) to cluster \( C \)
		\State \textbf{break}
		\EndIf
		\EndIf
		\If{the number of points in $C$ is $1$}
		\State Denote the point in $C$ as $\boldsymbol{x}_{C}$
		\State $num\leftarrow$ the number of points in the cube formed by $\boldsymbol{x}_{C}$ and \(\boldsymbol{x}\) 
		\If{$num \leq \alpha$}
		\State Assign \( \boldsymbol{x}_{(i)} \) to cluster \( C \)
		\State \textbf{break}
		\EndIf
		\EndIf
		\EndFor
		\State Add \( \boldsymbol{x}_{(i)} \) as a new cluster to \( Cluster \).
		\EndFor
		\State \textbf{Output: } the first point of each cluster in $Cluster$
	\end{algorithmic}
\end{algorithm}

Figure \ref{fig:5} shows the cluster-based top-$b$ sequential design generated based on the initial design in Figure \ref{fig:4}(a) with \( b = 3 \), \( \alpha = 10 \), \( \beta = 5 \) and a total of $7$ iterations under the $\phi_{gra}$ criterion. The label assigned to each point indicates the iteration in which it was added, with the label ``0'' representing the initial design. It can be observed from Figure \ref{fig:5} that the point distribution pattern of the batch sequential design generated by Algorithm \ref{alg:2} under the gradient-based criterion \(\phi_{gra}\) is highly similar to that of the single-point sequential design: points are densely distributed in the bottom-left region with large gradient magnitudes, while they are sparsely distributed in the central region with small gradient magnitudes, and the overall design maintains favorable space-filling properties. This finding demonstrates that the batch sequential design generated by Algorithm \ref{alg:2} can retain the core characteristics of the selected criterion to a considerable extent.
\begin{figure}[htbp]
	\centering
	\includegraphics[width=0.7\textwidth]{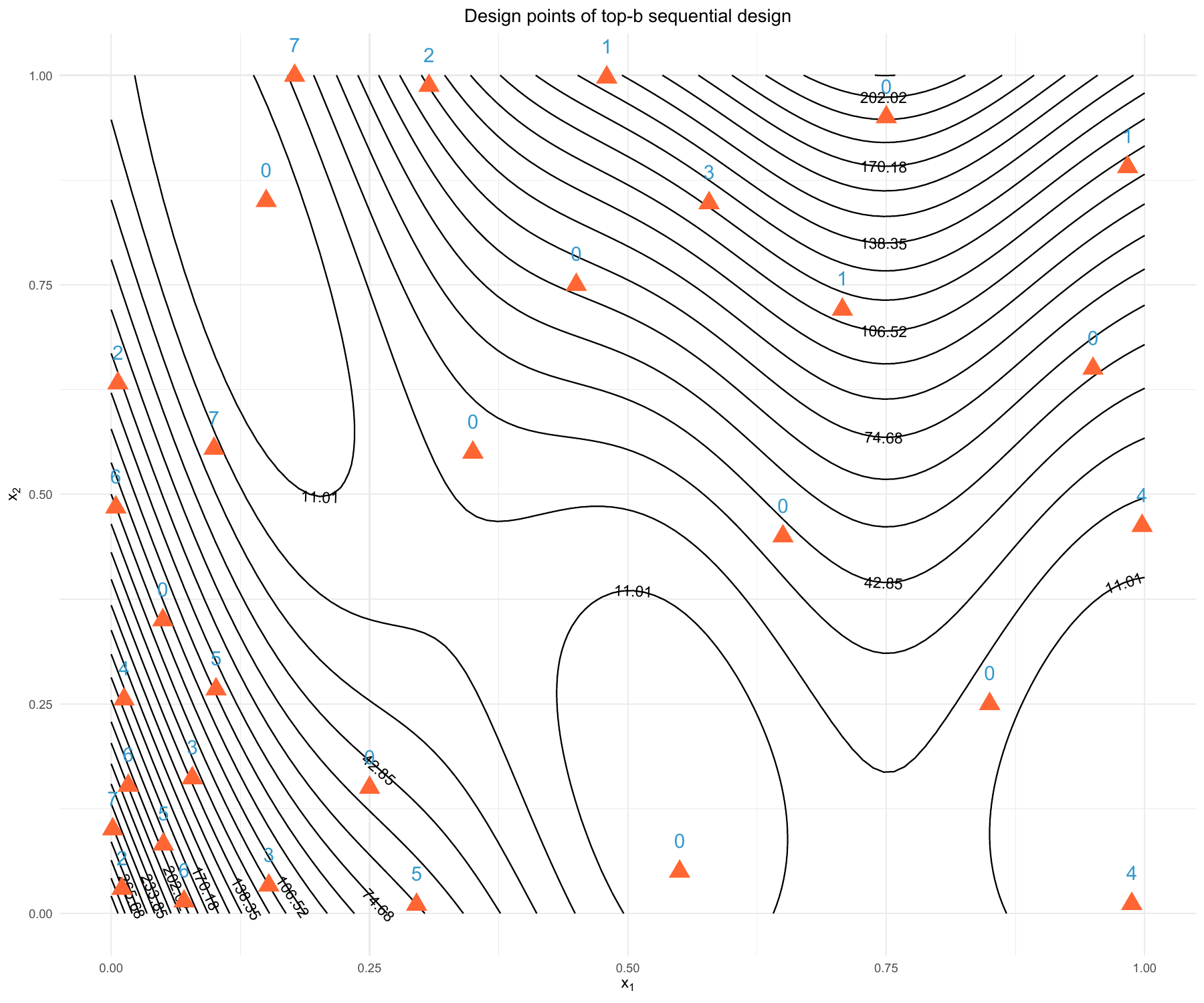}
	\caption{The order of points in a cluster-based top-$b$ sequential design.}
	\label{fig:5}
\end{figure}

\section{Simulations}
\label{sec:5}
In the present section, we systematically evaluate and compare the performance of sequential designs generated based on different criteria across a suite of test functions. To this end, we first provide a detailed introduction to the test functions employed in the subsequent analyses. Subsequently, we conduct a comparative assessment of the performance of one-point sequential designs and batch sequential designs, respectively.

\subsection{Test functions}
The first test function is the well-known Branin function. Its specific expression is presented in Equation \eqref{eq:2}, and we denote it as $f_1$. The second and the third test functions \citep{Mu:2016} are as follows,
$$
f_2(x_1,x_2)=\left(1 - \exp\left(-\frac{0.5}{x_2}\right)\right) \frac{2300x_1^3 + 1900x_1^2 + 2092x_1 + 60}{100x_1^3 + 500x_1^2 + 4x_1 + 20},
$$
and
$$f_3(x_1,x_2)=\log(2 + \sin(2\pi x_1))\cos(2\pi x_2^2),$$
respectively. The fourth test function is the Hartman function of dimension three. Its specific expression is 
$$f_4(x_1,x_2,x_3)=-\sum_{i = 1}^{4} c_i \exp\left\{ -\sum_{j = 1}^{3} \alpha_{ij}(x_j - p_{ij})^2 \right\},$$
where the coefficients \( c_i, \alpha_{ij} \), and \( p_{ij} \) for \( i = 1, \ldots, 4 \), \( j = 1,2,3 \) are respectively the $i$-th, $(i,j)$-th, and $(i,j)$-th elements of  
\[
\begin{pmatrix} 1 \\ 1.2 \\ 3 \\ 3.2 \end{pmatrix}, \quad \begin{pmatrix} 3 & 10 & 30 \\ 0.1 & 10 & 35 \\ 3 & 10 & 30 \\ 0.1 & 10 & 35 \end{pmatrix}, \text{ and } \begin{pmatrix} 0.3689 & 0.1170 & 0.2673 \\ 0.4699 & 0.4387 & 0.7470 \\ 0.1091 & 0.8732 & 0.5547 \\ 0.03815 & 0.5743 & 0.8828 \end{pmatrix}.
\] 
The fifth test function is the Ackley function of dimension five. Its specific expression is 
$$
f_5(\boldsymbol{x})=-20 \exp\left(-0.2 \sqrt{\frac{1}{5} \sum_{i=1}^{5} x_i^2}\right) - \exp\left(\frac{1}{5} \sum_{i=1}^{5} \cos(2\pi x_i)\right) + 20 + e.
$$

For the five test functions introduced above, we adopt those with fixed dimensions. In contrast, the subsequent two test functions we employ are functions with variable dimensions and distinctive geometric shapes. We denote the dimension of the subsequent test functions as \( m \). The sixth test function is the Zakharov function, which is a multidimensional flat-type function. Its specific form is
\[
f_6(\boldsymbol{x}) = \sum_{i=1}^{m} x_i^2 + \left( \sum_{i=1}^{m} 0.5 i x_i \right)^2 + \left( \sum_{i=1}^{m} 0.5 i x_i \right)^4.
\]  
The seventh test function is the Rosenbrock function, which is a multi-dimensional valley-shaped function. Its specific form is 
\[
f_7(\boldsymbol{x}) = \sum_{i=1}^{m-1} \left[ 100(x_{i + 1} - x_i^2)^2 + (x_i - 1)^2 \right].
\] 
The test functions $f_4$ to $f_7$ are all taken from \cite{Jamil:2013}. For convenience, all variables of the aforementioned test functions are constrained to the interval $[0, 1]$. Given that each of these test functions is continuous and infinitely differentiable, the surrogate models employed in the numerical simulations are all Kriging models with the correlation functions in the form of Equation \eqref{eq:6}.

\subsection{Comparison of one-point sequential designs}
In this section, we assess the performance of the one-point sequential designs generated based on seven different criteria: $\phi_{EI_0}$, $\phi_{EI_1}$, $\phi_{EI_2}$, $\phi_s$, $\phi_{MD}$, $\phi_{gra}$ and $\phi_{var}$. This evaluation is performed using the test functions presented in Section 5.1. For the two test functions \( f_6, f_7 \), we adopt dimensions of $3$, $5$, and $8$ for evaluation. For an $m$-dimensional test function, the initial design consists of $5m$ points, meaning the size of the initial design is proportional to the dimensionality of the test function. This is because all criteria are associated with the goodness-of-fit of the Gaussian process model. If the fitted Gaussian process model diverges significantly from the true model, these criteria may result in the selection of inappropriate design points, thus forming a vicious cycle. Consequently, as the dimensionality increases, more initial points are required to sufficiently explore the experimental domain prior to selecting subsequent points based on the proposed criteria. For all test functions, the number of added points is $20$ and the number of points in the test matrix is $10^5$. In each scenario, the test matrix is pre-determined and the performance of all one-point sequential designs is evaluated under this fixed test matrix.

To quantitatively assess the performance of a given design, we use it to fit a GP model with a correlation function in the form of Equation \eqref{eq:6}. Subsequently, we compute the RMSE and maximum absolute error (MAE) to quantify the predictive accuracy of the fitted GP model. For the computation of RMSE and MAE, we employ the ``randomLHS'' function from the ``lhs'' R package to generate a test matrix \( \boldsymbol{X}_{\text{test}} \), and the RMSE and MAE are computed on this matrix as follows, 
$$RMSE(\boldsymbol{X})=\sqrt{\frac{\sum_{\boldsymbol{x}\in\boldsymbol{X}_{test}}\left(f(\boldsymbol{x})-\hat{y}_{\boldsymbol{X}}(\boldsymbol{x})\right)^2}{n}}$$
and
$$MAE(\boldsymbol{X})=\max_{\boldsymbol{x}\in\boldsymbol{X}_{test}}|f(\boldsymbol{x})-\hat{y}_{\boldsymbol{X}}(\boldsymbol{x})|,$$
where $n$ denotes the number of points in the test matrix $\boldsymbol{X}_{\text{test}}$.

Tables \ref{tab:1} and \ref{tab:2} present the MAE and RMSE of different one-point sequential designs under different test functions, respectively. The bolded numbers in each row denote the design that exhibits the best performance under the corresponding scenario. As we can see from the tables, the best-performing design in most cases is one of the two sequential designs proposed in this paper. Additionally, the sequential designs derived from the variance-based criterion demonstrate superior performance in terms of MAE, while the sequential designs generated by gradient-based criterion perform better in terms of RMSE. Although there is no theoretical justification for the outperformance of designs generated by gradient-based criteria in terms of RMSE, an analysis of the underlying principles of the two criteria provides some insights. Specifically, the gradient-based criterion aims to add points uniformly based on the magnitude of gradients, whereas variance-based criterion theoretically selects points with the largest upper bound of absolute error according to Theorem \ref{the:2}. This discrepancy in objectives implies that designs generated by gradient-based criterion may yield better results for RMSE, while those generated based on the variance-based criterion are more effective in minimizing MAE.

\begin{table}[h!]
	\hfill
	\renewcommand{\arraystretch}{1.2}
	\caption{The MAE of different one-point sequential designs for different test functions.}
	\label{tab:1}
	\centering
	\normalsize
	\setlength{\tabcolsep}{4.5pt}
	\begin{tabular}{ccccccccc}
		\hline
		Test function&   $m$&   $\phi_{EI_0}$&   $\phi_{EI_1}$&   $\phi_{EI_2}$&   $\phi_{s}$&   $\phi_{MD}$&   $\phi_{gra}$&   $\phi_{var}$\\ \hline
		$f_1$&	$2$&	$49.61$&	$9.85$&	$3.07$&	$3.79$&	$21.30$&	$26.02$&	$\mathbf{2.05}$\\
		$f_2$&	$2$&	$2.41$&	$0.99$&	$0.82$&	$1.24$&	$3.26$&	$\mathbf{0.75}$&	$0.97$\\
		$f_3$&	$2$&	$0.84$&	$0.54$&	$0.26$&	$0.25$&	$0.37$&	$0.99$&	$\mathbf{0.21}$\\
		$f_4$&	$3$&	$1.00$&	$1.17$&	$1.27$&	$1.27$&	$0.96$&	$0.80$&	$\mathbf{0.80}$\\
		$f_5$&	$5$&	$14.49$&	$13.94$&	$14.05$&	$14.65$&	$13.91$&	$14.05$&	$\mathbf{13.69}$\\
		$f_6$&	$3$&	$1.90$&	$2.09$&	$1.15$&	$0.35$&	$1.17$&	$2.60$&	$\mathbf{0.28}$\\
		$f_6$&	$5$&	$235.12$&	$298.32$&	$\mathbf{112.92}$&	$213.95$&	$426.67$&	$228.48$&	$186.25$\\
		$f_6$&	$8$&	$14565$&	$14359$&	$10203$&	$25987$&	$25124$&	$\mathbf{9478}$&	$11158$\\
		$f_7$&	$3$&	$6.92$&	$2.95$&	$2.21$&	$1.44$&	$6.86$&	$6.23$&	$\mathbf{1.41}$\\
		$f_7$&	$5$&	$91.53$&	$163.05$&	$153.94$&	$61.67$&	$147.37$&	$104.37$&	$\mathbf{58.59}$\\
		$f_7$&	$8$&	$235.79$&	$248.43$&	$204.63$&	$179.14$&	$278.62$&	$255.74$&	$\mathbf{177.21}$\\
		\hline
	\end{tabular}
\end{table} 

\begin{table}[h!]
	\hfill
	\renewcommand{\arraystretch}{1.2}
	\caption{The RMSE of different one-point sequential designs for different test functions.}
	\label{tab:2}
	\centering
	\normalsize
	\setlength{\tabcolsep}{3.5pt}
	\begin{tabular}{ccccccccc}
		\hline
		Test function&   $m$&   $\phi_{EI_0}$&   $\phi_{EI_1}$&   $\phi_{EI_2}$&   $\phi_{s}$&   $\phi_{MD}$&   $\phi_{gra}$&   $\phi_{var}$\\
		\hline
		$f_1$&   $2$&   $4.49$&   $0.72$&   $0.72$&   $1.11$&   $1.70$&   $\mathbf{0.52}$&   $0.65$\\
		$f_2$&   $2$&   $0.43$&   $0.26$&   $0.25$&   $0.35$&   $0.30$&   $\mathbf{0.21}$&   $0.23$\\
		$f_3$&   $2$&   $0.11$&   $0.09$&   $0.07$&   $0.07$&   $0.08$&   $\mathbf{0.06}$&   $0.06$\\
		$f_4$&   $3$&   $0.20$&   $0.21$&   $0.23$&   $0.26$&   $0.20$&   $\mathbf{0.17}$&   $0.31$\\
		$f_5$&   $5$&   $0.91$&   $0.81$&   $0.82$&   $0.77$&   $0.81$&   $0.81$&   $\mathbf{0.77}$\\
		$f_6$&   $3$&   $0.18$&   $0.15$&   $0.10$&   $0.08$&   $0.15$&   $0.22$&   $\mathbf{0.06}$\\
		$f_6$&	$5$&	$24.05$&	$25.32$&	$25.06$&	$28.83$&	$24.92$&	$\mathbf{23.11}$&	$35.49$\\
		$f_6$&   $8$&   $1637.46$&   $1611.18$&   $\mathbf{1162.48}$&   $2103.61$&   $1349.59$&   $1207.69$&   $1487.18$\\
		$f_7$&   $3$&   $0.76$&   $0.42$&   $0.30$&   $0.28$&   $0.59$&   $\mathbf{0.24}$&   $0.30$\\
		$f_7$&   $5$&   $16.06$&   $19.92$&   $15.19$&   $\mathbf{13.18}$&   $17.71$&   $15.68$&   $13.25$\\
		$f_7$&   $8$&   $37.11$&   $42.03$&   $35.40$&   $30.97$&   $47.88$&   $\mathbf{29.61}$&   $30.34$\\
		\hline
	\end{tabular}
\end{table} 

\subsection{Comparison of batch sequential design}
For batch sequential design, we adopt the sequential design generated based on MD as the benchmark. Then we compare the performance of the cluster-based top-$b$ designs obtained using the $\phi_{EI_0}$, $\phi_{EI_1}$, $\phi_{EI_2}$, $\phi_s$, $\phi_{gra}$ and $\phi_{var}$ criteria under different test functions. The test functions, their corresponding dimensions, the number of initial points, and the number of points in the test matrix are consistent with those in the one-point sequential design. For various scenarios, we separately present the performance when \( b = 2, 3, 5 \). 
In all cases, we set the number of iterations to $10$ and specify the parameters as \( \alpha = 15 \) and \( \beta = 5 \).

Tables \ref{tab:3.1}--\ref{tab:4.2} show the MAE and RMSE of different batch sequential designs across various test functions. From these tables, it can be observed that the performance of the cluster-based top-$b$ sequential designs generated based on gradient-based or variance-based criteria outperforms that of other comparative designs in the majority of test cases. Furthermore, designs based on the variance-based criterion exhibit superior performance in terms of MAE, whereas those based on the gradient-based criterion demonstrate better performance in terms of RMSE. This finding is consistent with the simulation results of the one-point sequential designs reported in Tables \ref{tab:1} and \ref{tab:2}. In addition, within the framework of Algorithm \ref{alg:2}, even when other criteria are employed, the resulting batch sequential designs outperform the sequential uniform design, which indicates the effectiveness of Algorithm \ref{alg:2}. It is worth noting that when $b = 2$, the total number of points in the batch sequential design is the same as that of the one-point sequential designs presented in Tables \ref{tab:1} and \ref{tab:2}. Through comparison, we can find that, in some cases, the batch sequential design is even better than the corresponding one-point sequential design, while in other cases, their performances are nearly indistinguishable. This phenomenon may be attributed to the fact that selecting multiple points in each iteration mitigates the impact of inaccurate model fitting to a certain extent. It also demonstrates the potential of leveraging more information to select multiple points per iteration.  

\begin{table}[h!]
	\hfill
	\renewcommand{\arraystretch}{1.2}
	\caption{The MAE of different batch sequential designs for different test functions.}
	\label{tab:3.1}
	\centering
	\normalsize
	\begin{tabular}{cccccccccc}
		\hline
		Test function&	$m$&	$b$&	$\phi_{EI_0}$&	$\phi_{EI_1}$&	$\phi_{EI_2}$&	$\phi_{s}$&	$\phi_{MD}$&	$\phi_{gra}$&	$\phi_{var}$\\
		\hline
		$f_1$&	$2$&	$2$&	$4.73$&	$5.63$&	$2.56$&	$3.41$&	$11.41$&	$4.92$&	$\mathbf{1.64}$\\
		$f_1$&	$2$&	$3$&	$4.88$&	$1.94$&	$1.28$&	$0.77$&	$4.18$&	$4.49$&	$\mathbf{0.72}$\\
		$f_1$&	$2$&	$5$&	$5.39$&	$0.36$&	$0.30$&	$\mathbf{0.08}$&	$2.66$&	$8.00$&	$0.11$\\
		$f_2$&	$2$&	$2$&	$1.95$&	$1.08$&	$1.03$&	$0.85$&	$4.32$&	$0.90$&	$\mathbf{0.83}$\\
		$f_2$&	$2$&	$3$&	$1.33$&	$0.99$&	$0.42$&	$0.39$&	$1.39$&	$1.66$&	$\mathbf{0.38}$\\
		$f_2$&	$2$&	$5$&	$0.56$&	$0.50$&	$0.32$&	$0.37$&	$0.66$&	$0.69$&	$\mathbf{0.21}$\\
		$f_3$&	$2$&	$2$&	$0.62$&	$0.65$&	$0.30$&	$0.25$&	$1.09$&	$0.68$&	$\mathbf{0.21}$\\
		$f_3$&	$2$&	$3$&	$0.31$&	$0.15$&	$0.23$&	$\mathbf{0.09}$&	$0.25$&	$0.41$&	$0.11$\\
		$f_3$&	$2$&	$5$&	$0.18$&	$0.07$&	$0.10$&	$0.06$&	$0.31$&	$0.12$&	$\mathbf{0.05}$\\
		$f_4$&	$3$&	$2$&	$1.27$&	$1.19$&	$\mathbf{0.84}$&	$1.28$&	$1.11$&	$1.21$&	$1.49$\\
		$f_4$&	$3$&	$3$&	$1.20$&	$1.19$&	$0.87$&	$1.00$&	$0.90$&	$1.18$&	$\mathbf{0.84}$\\
		$f_4$&	$3$&	$5$&	$1.00$&	$0.94$&	$0.87$&	$\mathbf{0.54}$&	$0.82$&	$1.18$&	$1.01$\\
		$f_5$&	$5$&	$2$&	$13.30$&	$12.66$&	$12.75$&	$13.15$&	$12.78$&	$12.71$&	$\mathbf{11.15}$\\
		$f_5$&	$5$&	$3$&	$\mathbf{11.38}$&	$12.86$&	$12.78$&	$12.68$&	$12.68$&	$12.13$&	$12.89$\\
		$f_5$&	$5$&	$5$&	$12.61$&	$12.92$&	$12.68$&	$12.79$&	$12.55$&	$12.11$&	$\mathbf{11.87}$\\	
		\hline
	\end{tabular}
\end{table} 

\begin{table}[h!]
	\hfill
	\renewcommand{\arraystretch}{1.2}
	\caption{The MAE of different batch sequential designs for different test functions.}
	\label{tab:3.2}
	\normalsize
	\setlength{\tabcolsep}{4.5pt}
	\centering
	\begin{tabular}{cccccccccc}
		\hline
		Test function&	$m$&	$b$&	$\phi_{EI_0}$&	$\phi_{EI_1}$&	$\phi_{EI_2}$&	$\phi_{s}$&	$\phi_{MD}$&	$\phi_{gra}$&	$\phi_{var}$\\
		\hline
		$f_6$&	$3$&	$2$&	$2.98$&	$1.47$&	$1.55$&	$0.60$&	$1.98$&	$2.50$&	$\mathbf{0.53}$\\
		$f_6$&	$3$&	$3$&	$2.15$&	$0.78$&	$0.72$&	$0.14$&	$0.43$&	$2.56$&	$\mathbf{0.13}$\\
		$f_6$&	$3$&	$5$&	$0.78$&	$0.21$&	$0.18$&	$0.15$&	$1.61$&	$2.02$&	$\mathbf{0.07}$\\
		$f_6$&	$5$&	$2$&	$193.33$&	$169.64$&	$\mathbf{141.85}$&	$150.62$&	$478.56$&	$203.35$&	$385.72$\\
		$f_6$&	$5$&	$3$&	$118.38$&	$111.96$&	$157.04$&	$139.15$&	$162.42$&	$\mathbf{77.20}$&	$156.14$\\
		$f_6$&	$5$&	$5$&	$92.04$&	$55.83$&	$55.94$&	$70.80$&	$168.31$&	$131.43$&	$\mathbf{54.11}$\\
		$f_6$&	$8$&	$2$&	$\mathbf{6522}$&	$15147$&	$17628$&	$23153$&	$21489$&	$7312$&	$12154$\\
		$f_6$&	$8$&	$3$&	$14265$&	$9981$&	$\mathbf{8980}$&	$13134$&	$28304$&	$14096$&	$14422$\\
		$f_6$&	$8$&	$5$&	$7502$&	$14194$&	$7209$&	$7158$&	$12399$&	$14939$&	$\mathbf{6336}$\\
		$f_7$&	$3$&	$2$&	$4.45$&	$6.62$&	$4.38$&	$1.88$&	$4.56$&	$5.47$&	$\mathbf{1.69}$\\
		$f_7$&	$3$&	$3$&	$8.54$&	$3.82$&	$1.79$&	$0.70$&	$2.30$&	$6.66$&	$\mathbf{0.67}$\\
		$f_7$&	$3$&	$5$&	$2.17$&	$1.48$&	$1.81$&	$\mathbf{0.40}$&	$2.65$&	$2.47$&	$0.49$\\
		$f_7$&	$5$&	$2$&	$156.86$&	$108.34$&	$125.45$&	$66.38$&	$109.19$&	$148.38$&	$\mathbf{56.24}$\\
		$f_7$&	$5$&	$3$&	$111.60$&	$108.38$&	$65.05$&	$52.09$&	$119.98$&	$137.97$&	$\mathbf{49.70}$\\
		$f_7$&	$5$&	$5$&	$37.21$&	$27.49$&	$21.39$&	$18.68$&	$30.39$&	$43.75$&	$\mathbf{13.88}$\\
		$f_7$&	$8$&	$2$&	$283.52$&	$178.16$&	$304.74$&	$295.58$&	$\mathbf{175.32}$&	$342.52$&	$296.69$\\
		$f_7$&	$8$&	$3$&	$172.04$&	$173.09$&	$143.10$&	$156.67$&	$183.60$&	$235.05$&	$\mathbf{139.05}$\\
		$f_7$&	$8$&	$5$&	$178.00$&	$130.25$&	$162.33$&	$122.13$&	$138.46$&	$115.40$&	$\mathbf{115.11}$\\	
		\hline
	\end{tabular}
\end{table} 

\begin{table}[h!]
	\hfill
	\renewcommand{\arraystretch}{1.2}
	\caption{The RMSE of different batch sequential designs for different test functions.}
	\label{tab:4.1}
	\normalsize
	\centering
	\begin{tabular}{cccccccccc}
		\hline
		Test function&	$m$&	$b$&	$\phi_{EI_0}$&	$\phi_{EI_1}$&	$\phi_{EI_2}$&	$\phi_{s}$&	$\phi_{MD}$&	$\phi_{gra}$&	$\phi_{var}$\\ \hline
		$f_1$&	$2$&	$2$&	$0.61$&	$1.26$&	$0.50$&	$7.92$&	$1.62$&	$\mathbf{0.40}$&	$0.43$\\
		$f_1$&	$2$&	$3$&	$0.45$&	$0.24$&	$0.18$&	$0.23$&	$0.42$&	$\mathbf{0.14}$&	$0.18$\\
		$f_1$&	$2$&	$5$&	$0.39$&	$0.03$&	$0.03$&	$0.02$&	$0.15$&	$0.07$&	$\mathbf{0.02}$\\
		$f_2$&	$2$&	$2$&	$0.33$&	$0.23$&	$0.25$&	$0.26$&	$0.28$&	$\mathbf{0.23}$&	$0.26$\\
		$f_2$&	$2$&	$3$&	$0.25$&	$0.18$&	$0.14$&	$0.12$&	$0.24$&	$0.33$&	$\mathbf{0.12}$\\
		$f_2$&	$2$&	$5$&	$0.15$&	$0.10$&	$0.06$&	$0.06$&	$0.08$&	$\mathbf{0.05}$&	$0.06$\\
		$f_3$&	$2$&	$2$&	$0.11$&	$0.09$&	$0.08$&	$0.08$&	$0.11$&	$\mathbf{0.06}$&	$0.07$\\
		$f_3$&	$2$&	$3$&	$0.083$&	$0.042$&	$0.028$&	$0.029$&	$0.032$&	$0.070$&	$\mathbf{0.028}$\\
		$f_3$&	$2$&	$5$&	$0.026$&	$0.011$&	$0.009$&	$0.014$&	$0.019$&	$0.029$&	$\mathbf{0.009}$\\
		$f_4$&	$3$&	$2$&	$0.21$&	$0.21$&	$0.26$&	$0.27$&	$0.21$&	$\mathbf{0.20}$&	$0.29$\\
		$f_4$&	$3$&	$3$&	$0.21$&	$0.23$&	$0.14$&	$0.20$&	$0.14$&	$\mathbf{0.12}$&	$0.15$\\
		$f_4$&	$3$&	$5$&	$0.20$&	$0.13$&	$0.16$&	$0.10$&	$0.11$&	$\mathbf{0.10}$&	$0.14$\\
		$f_5$&	$5$&	$2$&	$1.28$&	$0.83$&	$\mathbf{0.81}$&	$0.84$&	$0.81$&	$0.82$&	$0.84$\\
		$f_5$&	$5$&	$3$&	$0.90$&	$0.81$&	$0.81$&	$0.75$&	$0.77$&	$\mathbf{0.73}$&	$0.79$\\
		$f_5$&	$5$&	$5$&	$0.74$&	$0.73$&	$0.75$&	$0.66$&	$0.67$&	$\mathbf{0.63}$&	$0.69$\\
		\hline
	\end{tabular}
\end{table} 

\begin{table}[h!]
	\hfill
	\renewcommand{\arraystretch}{1.2}
	\caption{The RMSE of different batch sequential designs for different test functions.}
	\label{tab:4.2}
	\normalsize
	\setlength{\tabcolsep}{3pt}
	\centering
	\begin{tabular}{cccccccccc}
		\hline
		Test function&	$m$&	$b$&	$\phi_{EI_0}$&	$\phi_{EI_1}$&	$\phi_{EI_2}$&	$\phi_{s}$&	$\phi_{MD}$&	$\phi_{gra}$&	$\phi_{var}$\\ \hline
		$f_6$&	$3$&	$2$&	$0.36$&	$0.14$&	$0.14$&	$0.08$&	$0.12$&	$0.31$&	$\mathbf{0.08}$\\
		$f_6$&	$3$&	$3$&	$0.21$&	$0.05$&	$0.07$&	$\mathbf{0.03}$&	$0.05$&	$0.25$&	$0.03$\\
		$f_6$&	$3$&	$5$&	$0.08$&	$0.02$&	$0.02$&	$0.02$&	$0.06$&	$0.14$&	$\mathbf{0.01}$\\
		$f_6$&	$5$&	$2$&	$21.92$&	$15.24$&	$15.10$&	$\mathbf{14.34}$&	$24.04$&	$22.26$&	$20.84$\\
		$f_6$&	$5$&	$3$&	$15.88$&	$11.10$&	$12.10$&	$10.14$&	$10.66$&	$\mathbf{9.81}$&	$10.43$\\
		$f_6$&	$5$&	$5$&	$8.65$&	$\mathbf{3.28}$&	$3.44$&	$5.93$&	$5.87$&	$7.02$&	$4.61$\\
		$f_6$&	$8$&	$2$&	$1184.36$&	$1194.42$&	$1255.51$&	$1343.93$&	$1127.46$&	$\mathbf{967.89}$&	$1763.92$\\
		$f_6$&	$8$&	$3$&	$876.46$&	$1119.30$&	$1159.55$&	$1179.03$&	$1400.56$&	$1486.16$&	$1379.79$\\
		$f_6$&	$8$&	$5$&	$844.45$&	$1109.49$&	$890.40$&	$957.67$&	$997.87$&	$\mathbf{843.98}$&	$1146.62$\\
		$f_7$&	$3$&	$2$&	$0.51$&	$0.42$&	$0.31$&	$0.35$&	$0.48$&	$0.74$&	$\mathbf{0.28}$\\
		$f_7$&	$3$&	$3$&	$0.58$&	$0.28$&	$0.33$&	$0.14$&	$0.25$&	$0.44$&	$\mathbf{0.12}$\\
		$f_7$&	$3$&	$5$&	$0.18$&	$0.11$&	$0.12$&	$0.10$&	$0.23$&	$\mathbf{0.10}$&	$0.11$\\
		$f_7$&	$5$&	$2$&	$21.34$&	$11.53$&	$11.01$&	$11.18$&	$10.89$&	$16.58$&	$\mathbf{10.44}$\\
		$f_7$&	$5$&	$3$&	$13.01$&	$10.83$&	$9.58$&	$9.27$&	$9.27$&	$13.92$&	$\mathbf{9.21}$\\
		$f_7$&	$5$&	$5$&	$4.58$&	$2.72$&	$2.19$&	$3.64$&	$2.99$&	$\mathbf{2.06}$&	$2.85$\\
		$f_7$&	$8$&	$2$&	$46.21$&	$38.35$&	$49.05$&	$49.92$&	$36.29$&	$\mathbf{32.98}$&	$52.02$\\
		$f_7$&	$8$&	$3$&	$28.11$&	$22.07$&	$\mathbf{21.57}$&	$24.58$&	$26.87$&	$29.28$&	$23.72$\\
		$f_7$&	$8$&	$5$&	$22.44$&	$21.30$&	$22.97$&	$21.16$&	$20.79$&	$\mathbf{20.40}$&	$21.06$\\
		\hline
	\end{tabular}
\end{table}

In conclusion, for the objective of minimizing RMSE, we recommend the adoption of the gradient-based criterion. Conversely, for the objective of minimizing MAE, we recommend the adoption of the variance-based criterion. If both the one-point sequential design and the batch sequential design are feasible, we recommend the cluster-based top-$b$ sequential design from Algorithm \ref{alg:2}, as it not only reduces computational resource consumption but also achieves or even surpasses the performance of one-point sequential design.

\section{Concluding remarks}
\label{sec:conc}
In this paper, focusing on the sequential design of Kriging models for computer experiments, we introduce two novel criteria: gradient-based criterion and variance-based criterion, to guide the selection of additional design points in iterative sampling processes. We further provide accompanying theoretical justifications and algorithmic implementations to underpin the rationality and operability of these criteria. Specifically, the gradient-based criterion is more suitable for minimizing the average error, while the variance-based criterion is more appropriate for minimizing the maximum error. For batch sequential design, we put forward a framework termed the cluster-based top-$b$ sequential design approach, which is applicable to any one-point sequential design criterion. Finally, through comparisons with other existing criteria and methods, we demonstrate that the proposed criteria exhibit excellent performance in one-point sequential design. Moreover, applying them within the proposed framework also yields favorable results through the comparisons.

\bmhead{Acknowledgements} This research was supported by the National Natural Science Foundation of China (Grant Nos. 12131001 and 12371260) and the Fundamental Research Funds for Central Universities.

\section*{Declarations}
The authors declare that there is no conflict of interest. 

%%===================================================%%
%% For presentation purpose, we have included        %%
%% \bigskip command. Please ignore this.             %%

\begin{appendices}

\section*{Appendix: proofs}\label{secA1}

\begin{proof}[Proof of Theorem \ref{the:1}]
	We first note that
	$$|f(\boldsymbol{x})-\hat{y}_{\boldsymbol{X}}(\boldsymbol{x})| = |f(\boldsymbol{x})-f(\boldsymbol{x}^*)+f(\boldsymbol{x}^*)-\hat{y}_{\boldsymbol{X}}(\boldsymbol{x})|\leq|f(\boldsymbol{x})-f(\boldsymbol{x}^*)|+|f(\boldsymbol{x}^*)-\hat{y}_{\boldsymbol{X}}(\boldsymbol{x})|.$$
	Since $f(\boldsymbol{x})$ is smooth, by the Lagrange Mean Value Theorem, there exists $\boldsymbol{t}\in\{\alpha\boldsymbol{x}+(1-\alpha)\boldsymbol{x}^*\,|\,0\leq\alpha\leq1\}$ such that $|f(\boldsymbol{x})-f(\boldsymbol{x}^*)| = \nabla f(\boldsymbol{t}) \cdot (\boldsymbol{x}-\boldsymbol{x}^*)$, where $\cdot$ denotes the inner product. In addition, we have $\nabla f(\boldsymbol{t}) \cdot (\boldsymbol{x}-\boldsymbol{x}^*) \leq \|\nabla f(\boldsymbol{t})\| \times d(\boldsymbol{x},\boldsymbol{x}^*)$. This completes the proof.
\end{proof}

Before proving Theorem \ref{the:2}, we first introduce a lemma. This lemma is derived from the Theorem 11.4 in \cite{Wendland:2004}.
\renewcommand{\thelemma}{A1}
\begin{lemma}
	\label{lem:A1}
	Let $\mathcal{X}\subseteq\mathbb{R}^m$ be open, $\mathcal{N}_k(\mathcal{X})$ be the native Hilbert function space corresponding to the symmetric positive definite kernel $k(\boldsymbol{x},\boldsymbol{y})\subseteq C^{2m}(\mathcal{X}, \mathcal{X}):\mathcal{X}\times\mathcal{X}\rightarrow\mathbb{R}$ and $f$ be the smooth true function. Then for every $\boldsymbol{x} \in \mathcal{X}$ and every $\alpha \in \mathbb{N}_0^m$ with $\mathbf{1}^T\alpha\leq m$, we have
	$$
	\left| D^\alpha f(\boldsymbol{x}) - D^\alpha \hat{y}_{\boldsymbol{X}}(\boldsymbol{x}) \right| \leq P_{k,\boldsymbol{X}}^{(\alpha)}(\boldsymbol{x}) | f |_{\mathcal{N}_k(\mathcal{X})},$$
	where $$\left[ P_{k,\boldsymbol{X}}^{(\alpha)}(\boldsymbol{x}) \right]^2 = D_1^\alpha D_2^\alpha k(\boldsymbol{x}, \boldsymbol{x}) - 2\sum_{j=1}^n D^\alpha u_j^*(\boldsymbol{x}) D_1^\alpha k(\boldsymbol{x}, \boldsymbol{x}_j) + \sum_{i,j=1}^n D^\alpha u_i^*(\boldsymbol{x}) D^\alpha u_j^*(\boldsymbol{x}) k(\boldsymbol{x}_i, \boldsymbol{x}_j)$$
	and $u_i^*(\boldsymbol{x})$ is the $i$-th component of $\boldsymbol{K}^{-1}(\boldsymbol{X})\boldsymbol{r}_{\boldsymbol{X}}(\boldsymbol{x}).$ 
\end{lemma}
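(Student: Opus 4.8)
The plan is to read this as a standard reproducing-kernel Hilbert space (RKHS) error estimate in the spirit of Theorem 11.4 of \cite{Wendland:2004}, specialized to the kernel interpolant $\hat{y}_{\boldsymbol{X}}$ and rewritten in the present notation. First I would record that, since $\boldsymbol{K}(\boldsymbol{X})$ is symmetric positive definite, $\hat{y}_{\boldsymbol{X}}$ admits the cardinal representation $\hat{y}_{\boldsymbol{X}}(\boldsymbol{x})=\sum_{j=1}^n u_j^*(\boldsymbol{x})f(\boldsymbol{x}_j)$, where $u_j^*(\boldsymbol{x})$ is the $j$-th component of $\boldsymbol{K}^{-1}(\boldsymbol{X})\boldsymbol{r}_{\boldsymbol{X}}(\boldsymbol{x})$, and that the reproducing property gives $f(\boldsymbol{x}_j)=\langle f, k(\cdot,\boldsymbol{x}_j)\rangle_{\mathcal{N}_k}$ for every node.

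The analytic core is to push the reproducing identity through the derivative operator $D^\alpha$. Using $k\in C^{2m}(\mathcal{X},\mathcal{X})$ together with the index restriction $\mathbf{1}^T\alpha\leq m$, one verifies that $D_2^\alpha k(\cdot,\boldsymbol{x})\in\mathcal{N}_k(\mathcal{X})$ and that differentiation may be exchanged with the native-space inner product, yielding the extended reproducing relation
$$D^\alpha g(\boldsymbol{x})=\langle g,\,D_2^\alpha k(\cdot,\boldsymbol{x})\rangle_{\mathcal{N}_k}\quad\text{for all }g\in\mathcal{N}_k(\mathcal{X}),$$
where $D_1^\alpha$ and $D_2^\alpha$ denote differentiation in the first and second argument of $k$, respectively.

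With this relation available I would introduce the error representer
$$R_{\boldsymbol{x}}^\alpha:=D_2^\alpha k(\cdot,\boldsymbol{x})-\sum_{j=1}^n D^\alpha u_j^*(\boldsymbol{x})\,k(\cdot,\boldsymbol{x}_j)\in\mathcal{N}_k(\mathcal{X}),$$
and apply the extended reproducing relation both to $D^\alpha f(\boldsymbol{x})$ and to each $f(\boldsymbol{x}_j)$, collapsing the derivative error into the single inner product $D^\alpha f(\boldsymbol{x})-D^\alpha\hat{y}_{\boldsymbol{X}}(\boldsymbol{x})=\langle f,\,R_{\boldsymbol{x}}^\alpha\rangle_{\mathcal{N}_k}$. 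The Cauchy--Schwarz inequality then delivers $|D^\alpha f(\boldsymbol{x})-D^\alpha\hat{y}_{\boldsymbol{X}}(\boldsymbol{x})|\leq\|R_{\boldsymbol{x}}^\alpha\|_{\mathcal{N}_k}\,|f|_{\mathcal{N}_k(\mathcal{X})}$, so the task reduces to identifying $\|R_{\boldsymbol{x}}^\alpha\|_{\mathcal{N}_k}$ with $P_{k,\boldsymbol{X}}^{(\alpha)}(\boldsymbol{x})$.

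Finally I would expand $\|R_{\boldsymbol{x}}^\alpha\|_{\mathcal{N}_k}^2=\langle R_{\boldsymbol{x}}^\alpha, R_{\boldsymbol{x}}^\alpha\rangle_{\mathcal{N}_k}$ into its quadratic, cross, and node--node groups and evaluate each using the reproducing identities $\langle D_2^\alpha k(\cdot,\boldsymbol{x}), D_2^\alpha k(\cdot,\boldsymbol{x})\rangle_{\mathcal{N}_k}=D_1^\alpha D_2^\alpha k(\boldsymbol{x},\boldsymbol{x})$, $\langle D_2^\alpha k(\cdot,\boldsymbol{x}), k(\cdot,\boldsymbol{x}_j)\rangle_{\mathcal{N}_k}=D_1^\alpha k(\boldsymbol{x},\boldsymbol{x}_j)$, and $\langle k(\cdot,\boldsymbol{x}_i), k(\cdot,\boldsymbol{x}_j)\rangle_{\mathcal{N}_k}=k(\boldsymbol{x}_i,\boldsymbol{x}_j)$; collecting them reproduces exactly the stated formula for $[P_{k,\boldsymbol{X}}^{(\alpha)}(\boldsymbol{x})]^2$. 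The main obstacle is the second paragraph: rigorously justifying that the reproducing property survives differentiation up to order $m$, which is precisely where the $C^{2m}$ smoothness and the constraint $\mathbf{1}^T\alpha\leq m$ are genuinely used, and this is the technical ingredient imported from \cite{Wendland:2004}; once it is in place, the representer identity, Cauchy--Schwarz, and the final expansion are purely algebraic.
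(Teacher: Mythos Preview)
Your proposal is correct and follows exactly the standard RKHS argument behind Theorem~11.4 of \cite{Wendland:2004}. Note, however, that the paper does not supply its own proof of this lemma: it simply imports the result from Wendland, so there is no separate ``paper's proof'' to compare against---your sketch \emph{is} the underlying argument that the citation points to.
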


\begin{proof}[Proof of Theorem \ref{the:2}]
	1. Let $\alpha$ be an $m$-dimensional vector with all zeros, then we have
	\begin{align*}
		\left[ P_{k,\boldsymbol{X}}^{(\alpha)}(\boldsymbol{x}) \right]^2 &= k(\boldsymbol{x}, \boldsymbol{x}) - 2\sum_{j=1}^n u_j^*(\boldsymbol{x}) k(\boldsymbol{x}, \boldsymbol{x}_j) + \sum_{i,j=1}^n u_i^*(\boldsymbol{x}) u_j^*(\boldsymbol{x}) k(\boldsymbol{x}_i, \boldsymbol{x}_j)\\&=k(\boldsymbol{x}, \boldsymbol{x}) - 2\boldsymbol{r}^T_{\boldsymbol{X}}(\boldsymbol{x})\boldsymbol{K}^{-1}(\boldsymbol{X})\boldsymbol{r}_{\boldsymbol{X}}(\boldsymbol{x}) + \boldsymbol{r}^T_{\boldsymbol{X}}(\boldsymbol{x})\boldsymbol{K}^{-1}(\boldsymbol{X})\boldsymbol{K}(\boldsymbol{X})\boldsymbol{K}^{-1}(\boldsymbol{X})\boldsymbol{r}_{\boldsymbol{X}}(\boldsymbol{x})\\&=k(\boldsymbol{x},\boldsymbol{x})-\boldsymbol{r}_{\boldsymbol{X}}^T(\boldsymbol{x})\boldsymbol{K}^{-1}(\boldsymbol{X})\boldsymbol{r}_{\boldsymbol{X}}(\boldsymbol{x}).
	\end{align*}
	According to Lemma \ref{lem:A1}, we have $\left| f(\boldsymbol{x}) - \hat{y}_{\boldsymbol{X}}(\boldsymbol{x}) \right| \leq \sqrt{k(\boldsymbol{x},\boldsymbol{x})-\boldsymbol{r}_{\boldsymbol{X}}^T(\boldsymbol{x})\boldsymbol{K}^{-1}(\boldsymbol{X})\boldsymbol{r}_{\boldsymbol{X}}(\boldsymbol{x})} \times | f |_{\mathcal{N}_k(\mathcal{X})}$. This completes the proof.
	
	2. Let $h(\boldsymbol{x}) = f(\boldsymbol{x}) - \hat{y}_{\boldsymbol{X}}(\boldsymbol{x})$. Then we have
	$h(\boldsymbol{x}) = h(\boldsymbol{x}^*) + \nabla h(\boldsymbol{x}) \cdot (\boldsymbol{x}-\boldsymbol{x}^*) + O(\boldsymbol{x}-\boldsymbol{x}^*)^2$. It is obvious that $h(\boldsymbol{x}^*)=0$. Similar to the proof of Theorem 1, we have $|h(\boldsymbol{x})|\leq \| \nabla h(\boldsymbol{t})\|\times d(\boldsymbol{x},\boldsymbol{x}^*),$ where $\boldsymbol{t}\in\{\alpha\boldsymbol{x}+(1-\alpha)\boldsymbol{x}^*\,|\,0\leq\alpha\leq1\}$ and $ \nabla h(\boldsymbol{t}) = (\partial h(\boldsymbol{t})/\partial t_1, \ldots, \partial h(\boldsymbol{t})/\partial t_m)$. Let $\alpha_i$ be an $m$-dimensional vector whose $i$-th element is $1$ and the remaining elements are $0$. According to Lemma \ref{lem:A1}, we have $|\partial h(\boldsymbol{t})/\partial t_i|\leq P_{k,\boldsymbol{X}}^{(\alpha_i)}(\boldsymbol{t}) | f |_{\mathcal{N}_k(\mathcal{X})}$. That is, 
	$$\| \nabla h(\boldsymbol{t})\|=\sqrt{\sum_{i=1}^{m}\left(\frac{\partial h(\boldsymbol{t})}{\partial t_i}\right)^2}\leq\sqrt{\sum_{i=1}^{m}\left[P_{k,\boldsymbol{X}}^{(\alpha_i)}(\boldsymbol{t}) \right]^2}\times | f |_{\mathcal{N}_k(\mathcal{X})}.$$
	
	According to the rules of differentiation for vectors and matrices, we know that $\left[P_{k,\boldsymbol{X}}^{(\alpha_i)}(\boldsymbol{t}) \right]^2$ is the $i$-th diagonal element of  
	\begin{align*}
		&\nabla^2k(\boldsymbol{t},\boldsymbol{t})-2\nabla r_{\boldsymbol{X}}^T(\boldsymbol{t})\boldsymbol{K}^{-1}(\boldsymbol{X})\nabla r_{\boldsymbol{X}}(\boldsymbol{t})+\nabla r_{\boldsymbol{X}}^T(\boldsymbol{t})\boldsymbol{K}^{-1}(\boldsymbol{X})\boldsymbol{K}(\boldsymbol{X})\boldsymbol{K}^{-1}(\boldsymbol{X})\nabla r_{\boldsymbol{X}}(\boldsymbol{t})\\=&\nabla^2k(\boldsymbol{t},\boldsymbol{t})-\nabla r_{\boldsymbol{X}}^T(\boldsymbol{t})\boldsymbol{K}^{-1}(\boldsymbol{X})\nabla r_{\boldsymbol{X}}(\boldsymbol{t}).
	\end{align*}
	Combining the above results, we can obtain that $\| \nabla h(\boldsymbol{t})\|\leq\sqrt{\sum_{i=1}^mg_i(\boldsymbol{t})}\times | f |_{\mathcal{N}_k(\mathcal{X})}$. This completes the proof. 
\end{proof}

%%=============================================%%
%% For submissions to Nature Portfolio Journals %%
%% please use the heading ``Extended Data''.   %%
%%=============================================%%

%%=============================================================%%
%% Sample for another appendix section			       %%
%%=============================================================%%

%% \section{Example of another appendix section}\label{secA2}%
%% Appendices may be used for helpful, supporting or essential material that would otherwise 
%% clutter, break up or be distracting to the text. Appendices can consist of sections, figures, 
%% tables and equations etc.

\end{appendices}

%%===========================================================================================%%
%% If you are submitting to one of the Nature Portfolio journals, using the eJP submission   %%
%% system, please include the references within the manuscript file itself. You may do this  %%
%% by copying the reference list from your .bbl file, paste it into the main manuscript .tex %%
%% file, and delete the associated \verb+\bibliography+ commands.                            %%
%%===========================================================================================%%

\bibliography{SP-bib}% common bib file
%% if required, the content of .bbl file can be included here once bbl is generated
%%\input sn-article.bbl

\end{document}